\theoremstyle{plain}
\newtheorem{theorem}{Theorem}[section]
\newtheorem{lemma}[theorem]{Lemma}
\theoremstyle{definition}
\newtheorem{definition}[theorem]{Definition}
\theoremstyle{remark}
\newtheorem{remark}[theorem]{Remark}
\DeclareMathOperator*{\E}{\mathbb{E}}
\DeclareMathOperator*{\argmin}{arg\,min}
\newcommand{\eps}{\varepsilon}
\newcommand{\R}{\mathbb{R}}
\newcommand{\allpairs}{\textit{nAllPairs}}
\newcommand{\fast}{\textit{fastConst}}
\newcommand{\sandeep}[1]{\textcolor{orange}{[sandeep\@ifnotempty{#1}{: #1}]}}
\newcommand{\shyam}[1]{\textcolor{violet}{[shyam\@ifnotempty{#1}{: #1}]}}
\icmltitlerunning{\hfill Data Structures for Density Estimation \hfill\thepage}
\begin{document}

\twocolumn[
\icmltitle{Data Structures for Density Estimation}



\icmlsetsymbol{equal}{*}

\begin{icmlauthorlist}
\icmlauthor{Anders Aamand}{equal,MIT}
\icmlauthor{Alexandr Andoni}{Col}
\icmlauthor{Justin Y. Chen}{MIT}
\icmlauthor{Piotr Indyk}{MIT}
\icmlauthor{Shyam Narayanan}{MIT}
\icmlauthor{Sandeep Silwal}{MIT}
\end{icmlauthorlist}

\icmlaffiliation{MIT}{Computer Science and Artificial Intelligence Laboratory, Massachusetts Institute of Technology, Cambridge, MA 02139, USA.}
\icmlaffiliation{Col}{Data Science Institute, Columbia University, New York, NY 10027, USA}

\icmlcorrespondingauthor{Anders Aamand}{aamand@mit.edu}
\icmlcorrespondingauthor{Alexandr Andoni}{andoni@mit.edu}
\icmlcorrespondingauthor{Justin Y. Chen}{justc@mit.edu}
\icmlcorrespondingauthor{Piotr Indyk}{indyk@mit.edu}
\icmlcorrespondingauthor{Shyam Narayanan}{shyamsn@mit.edu}
\icmlcorrespondingauthor{Sandeep Silwal}{silwal@mit.edu}

\icmlkeywords{Machine Learning, ICML}

\vskip 0.3in
]



\printAffiliationsAndNotice{$^*$Authors listed alphabetically.} 

\begin{abstract}
We study statistical/computational tradeoffs for the following density estimation problem: given $k$ distributions $v_1, \ldots, v_k$ over a discrete domain of size $n$, and sampling access to a distribution $p$, identify $v_i$ that is ``close'' to $p$. Our main result is the first data structure that, given a sublinear (in $n$)  number of samples from $p$, identifies  $v_i$ in time sublinear in $k$. We also give an improved version of the algorithm of~\cite{acharya2018maximum} that reports $v_i$ in time linear in $k$. The experimental evaluation of the latter algorithm shows that it achieves a significant reduction in the number of operations needed to achieve a given accuracy compared to prior work. 
\end{abstract}

\section{Introduction}

Finding the hypothesis that best matches a given set of samples, i.e., the {\em density estimation problem}, is a fundamental task in statistics and machine learning. In the finite setting, this task can be formulated as follows: given $k$ distributions $v_1, \ldots, v_k$ over some domain $U$ and access to samples from a distribution $p$ over $U$, output $v_i$ that is ``close'' to $p$. 
In the ``proper'' case, we assume that $p=v_j$ for some $1 \le j \le k$, and the goal is to output $v_i$ such that $\|p-v_i\|_1 \le \eps$ for some error parameter $\eps>0$. In the more general ``improper'' case, $p$ is arbitrary and the goal is to report $v_i$ such that
\[\|p-v_i\|_1 \le C \cdot \min_j \|p- v_j\|_1 + \eps \]
for some constant $C>1$ and error parameter $\eps>0$. 

The density estimation problem has been studied extensively. The work of Scheffe~\cite{Scheffe1947useful}, Yatracos~\cite{yatracos1985rates}, Devroye-Lugosi~\cite{devroye2001combinatorial}  showed that $O(\log(k)/\eps^2)$ samples are sufficient to solve the improper version of the problem for some constant $C>1$ with a constant probability in $O(k^2 \log (k)/\eps^2 )$ time.   Recently Acharya et al~\cite{acharya2018maximum} improved the time to $O(k \log (k)/\eps^2 )$ time.

Despite the generality of the formulation, this approach yields almost optimal sampling bounds for many natural classes of distributions such as mixtures of Gaussians~\cite{daskalakis2014faster,suresh2014spherical,diakonikolas2019robust}, see also the survey~\cite{diakonikolas2016learning}. Furthermore, the method has been extended to enable various forms of privacy~\cite{canonne2019structure,bun2019private,gopi2020locally,kamath2020private}.

When the domain $U$ is a discrete set of size $n$, the problem can be viewed as a variant of approximate nearest neighbor search over $n$-dimensional vectors under the $L_1$ norm, a problem that has been studied extensively~\cite{andoni2018approximate}. Specifically, given a set of $k$ distributions $v_1, \ldots, v_k$, the goal of $C$-approximate nearest neighbor search is to build a data structure that, given any query distribution $p$, returns $v_i$ such that $\|p-v_i\|_1 \le C \cdot \min_j \|p- v_j\|_1$.

However, in the density estimation problem, the query distribution $p$ is not specified fully, but instead the algorithm is given only samples from $p$. This makes the problem substantially richer and more complex, as in addition to the usual {\em computational} resources (data structure space, query time), one also needs to consider the number of samples taken from $p$, a {\em statistical} resource.  Thus, designing data structures for this problem involves making tradeoffs between the computational and statistical resources, with the known algorithms forming the endpoints of the tradeoff curve. In particular:
\begin{itemize}
\item If the goal is to optimize the {\em computational} efficiency of the data structure, one can learn the query distribution $p$ up to an additive error of $\eps$ using $O(n/\eps^2)$ samples ~\cite{kamath2015learning}, and use any $c$-approximate nearest neighbor algorithm for the $L_1$ norm. In particular, the algorithm of~\cite{andoni2015optimal} yields (roughly) $O(nk+k^{1+1/(2c-1)})$ space and $O(nk^{1/(2c-1)})$ query time. This leads to an algorithm with polynomial space and sublinear (in $k$) query time, but at the price of using a linear (in $n$) number of samples.
\item If the goal is to optimize the {\em statistical} efficiency of the data structure, then the aforementioned result of \cite{acharya2018maximum} yields a data structure that uses only a logarithmic number of samples, but at the price of near linear (in $k$) query time. 
\end{itemize}

These two point-wise results raise the question of whether ``best of both worlds'' data structures exists, i.e., whether there are data structures which are efficient in both statistical and computational terms. This is the focus of the paper.

\paragraph{Our results} In this paper we initiate the study of computational/statistical tradeoffs of the (discrete) density estimation problem. Our main theoretical contributions are as follows:
\begin{itemize} 
\item We present the first data structure that solves the {\em proper} version of the problem with polynomial  space, sublinear query time {\em and} sublinear sampling complexity. This demonstrates that one can achieve non-trivial complexity bounds on all three parameters (space, query time, sampling complexity) {\em simultaneously}.
\item We also present an improved version of the data structure from~\cite{acharya2018maximum} for the {\em improper} case, reducing its query time from $O(k \log (k)/\eps^2)$ to $O(k/\eps^2)$, i.e., linear in $k$, while retaining its optimal sampling complexity bound of $O(\log(k)/\eps^2)$. 
\end{itemize}

On the empirical side, we experimentally evaluate the linear-time algorithm and compare its performance to the \cite{acharya2018maximum} algorithm on synthetic and on real networking data. These experiments display the practical benefits of the faster algorithm.  For example, for synthetic data, we demonstrate that our faster algorithm achieves over \textbf{2x} reduction in the number of comparisons needed to achieve the same level of accuracy. Similarly, our experiments on network data show up to \textbf{5x} reduction in the number of comparisons.

\paragraph{Open questions} We view a major part of the contribution of this paper as initiating the study of statistical/computational tradeoffs in data structures for density estimation. We design a data structure with polynomial space and sublinear query and sampling complexity but only in the proper case and only achieving query and sampling bounds that are slightly sublinear in the input parameters. \emph{Extending these results (1) to the improper case, (2) with stronger sublinear bounds, or (3) showing lower bounds are all exciting open problems.}

\subsection{Preliminaries}\label{sec:prelim}
We assume the discrete distributions $v_1, \ldots, v_k$ over the domain $[n]$ are fully specified. We assume $k$, the number of distributions, is much larger than $n$ but still polynomially related to $n$. For instance, $n^{1.01} \le k \le n^{C}$ for $C=100$ suffices. This mimics the classic nearest neighbor search (NNS) literature setting, where the aim is to get algorithms sublinear in the dataset size, which exactly corresponds to $k$ in our setting. 

Let $p$ be the unknown distribution we sample from. We actually attain results stronger than the truly proper case: we relax the equality condition and assume that there exists $v^*$ among the $v_i$'s such that $\|p - v^*\|_2 \le \frac{\eps}{2 \sqrt{n}}$.

We use the standard Poissonization trick and take $s' = \text{Pois}(s)$ samples \footnote{$\text{Pois}(s)$ denotes a draw from a Poisson distribution with parameter $s$.}. This doesn't affect the sample complexity asymptotically as $s' = \Theta(s)$ with high probability.  This is a standard trick used in distribution testing \cite{canonne2020survey, szpankowski2011average, valiant2011power} and ensures that for all $i \in [n]$, the number of samples observed that are equal to $i$ when sampling from $p$ is distributed as $\text{Pois}(s \cdot p(i))$ and the counts are independent.
We say $b \lesssim a$ for $a,b > 0$ if $b \le C \cdot a$ for some absolute positive constant $C$ which does not depend on $a$ or $b$.

We recall the guarantees of $\ell_{\infty}$ and $\ell_2$ NNS data structures.

\begin{theorem}{\cite{indyk2001approximate}}\label{thm:l_infinity_nns}
There exists some absolute constant $C$ such that given a dataset $X \subset \R^n$ with $|X| = k$, for every $\delta \in (0,1)$, we can solve the $\frac{C}{\delta} \log \log n$-approximate nearest neighbor problem on $X$ in $\ell_{\infty}$ using $\tilde{O}(nk^{1+\delta})$ space and $\tilde{O}(n)$ query time.
\end{theorem}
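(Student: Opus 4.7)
The plan is to construct, along the lines of Indyk's original priority-search-tree scheme, a recursive decision tree over the $k$ points whose query path has length $\tilde{O}(1)$ per coordinate examined and whose total size is $\tilde{O}(n k^{1+\delta})$. At each internal node I would maintain a candidate set $S \subseteq X$ and a target ``radius'' $R$; the node picks a coordinate $i^* \in [n]$ and a threshold $t^* \in \R$, then forms $S_L = \{x \in S : x_{i^*} < t^* - R\}$, $S_R = \{x \in S : x_{i^*} > t^* + R\}$, and a middle band $M = \{x \in S : |x_{i^*} - t^*| \le R\}$. The left and right children inherit $S_L \cup M$ and $S_R \cup M$ respectively; a query $p$ is routed to the left child if $p_{i^*} \le t^*$ and to the right child otherwise. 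Points in $M$ are safe for either descent because they are within $R$ of the split along this coordinate.

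The main combinatorial lemma I would prove is that, for any $S$ of size $k'$, one can choose $(i^*, t^*)$ and a geometrically enlarged radius $R' = (C/\delta) \cdot R$ so that the resulting middle band has size at most $k' / n^{\Omega(\delta)}$ while $S_L$ and $S_R$ each have size at most $\tfrac{2}{3}k'$. This is the crux: fix any putative ``true'' near neighbor $x^*$ and, for each coordinate $i$, sort $S$ by $x_i$; if every coordinate admitted only dense $R'$-wide bands around $x^*_i$, a pigeonhole argument over the $n$ coordinates shows many points of $S$ would be $\ell_\infty$-close to $x^*$ at the enlarged radius, contradicting the fact that the instance has a well-defined near neighbor at scale $R$. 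The factor $C/\delta$ in the approximation is precisely the slack this averaging argument consumes.

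From here the analysis is routine. Unrolling the recursion, the depth is $O(\log_{3/2} k) = O(\log k)$, the duplication per level is a factor of at most $1 + 1/n^{\Omega(\delta)}$ after amortizing the middle-band growth against the geometric shrinkage of $S_L, S_R$, and storing candidate lists of size $\tilde{O}(1)$ at the leaves yields total space $\tilde{O}(n k^{1+\delta})$. The query pays $O(1)$ at each internal node plus one exact $\ell_\infty$ distance computation at the leaf, giving $\tilde{O}(n)$. Iterating this construction over $O(\log \log n)$ geometrically spaced radius scales, as in the original paper, promotes the approximate decision procedure into a $(C/\delta)\log\log n$-approximate nearest neighbor data structure while preserving the space and time bounds up to polylog factors.

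The step I expect to be the main obstacle is sharpening the good-split lemma so that the middle band really shrinks by a polynomial factor $n^{\Omega(\delta)}$ per level; a naive averaging only gives a $n^{\Omega(\delta/\log\log n)}$ decay, which would blow up the space. Overcoming this requires a more careful potential-function argument --- tracking, e.g., the number of $\ell_\infty$-close pairs at scale $R'$ --- combined with a radius-doubling scheme that amortizes slack across the $O(\log\log n)$ outer scales. This is precisely the technical heart of Indyk's construction, and reproducing it cleanly is what drives the $(C/\delta)\log\log n$ approximation exponent.
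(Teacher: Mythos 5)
This statement is a citation of Indyk's $\ell_\infty$ nearest neighbor result \cite{indyk2001approximate}; the paper provides no proof of its own, so your proposal must be judged against the known construction. Your skeleton is right in spirit: a decision tree over $X$, at each node a single-coordinate cut with separation comparable to the query radius $R$ so that the true near neighbor cannot end up on the wrong side, $O(1)$ work per node during the query, and a potential/charging argument to bound the total tree size by $\tilde{O}(k^{1+\delta})$ nodes (times $n$ for storing coordinates).

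The genuine gap is the key lemma, and you have essentially flagged it yourself. Your proposed dichotomy --- every node admits a coordinate cut with $|S_L|,|S_R|\le \tfrac{2}{3}k'$ and a duplicated middle band of size $k'/n^{\Omega(\delta)}$ --- is false in general: if all of $S$ lies in an $\ell_\infty$ ball of radius $O(R)$, no coordinate admits any separated cut at all, let alone a balanced one with a sparse band. Indyk's construction does not duplicate a middle band and does not ask for balance. Its crux is a different dichotomy: either \emph{some} coordinate admits a cut into overlapping halves $A,B$ with separation $\ge R$ such that the potential $\Phi(S)=|S|^{1+\delta}$ satisfies $\Phi(A)+\Phi(B)\le\Phi(S)$ (it is this superadditivity, not a $2/3$ balance condition, that caps the number of leaves at $k^{1+\delta}$), or else no such cut exists on any coordinate, in which case one proves the entire set has $\ell_\infty$ diameter $O\bigl(\tfrac{1}{\delta}\log\log n\bigr)\cdot R$ and the node becomes a leaf storing an arbitrary representative. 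The $(C/\delta)\log\log n$ approximation factor is exactly the diameter bound of such a ``cut-free'' cluster; it does not come from iterating over $O(\log\log n)$ radius scales as you suggest in your final step. Since you concede that your averaging argument only yields $n^{\Omega(\delta/\log\log n)}$ decay and that fixing this ``is precisely the technical heart'' you have not reproduced, the proposal leaves the essential step unproven, and the middle-band route you chose is not the one that closes it.
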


The following theorem is obtained by using a standard combination of randomized dimensionality reduction theorem of Johnson-Lindenstrauss and locality-sensitive hashing~\cite{har2012approximate}.

\begin{theorem}{\cite{har2012approximate}}\label{thm:l2_nns}
Given a dataset $X \subset \R^n$ with $|X| = k$, for every $c > 1$, we can solve the $c$-approximate nearest neighbor problem on $X$ in $\ell_{2}$ using $(k^{1+\rho} + kn) \log^{O(1)} k$ space and $(k^{\rho} + n) \log^{O(1)}k$ query time where $\rho \le 1/c$.
\end{theorem}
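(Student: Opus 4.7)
The plan is to combine two standard tools: Johnson--Lindenstrauss (JL) dimensionality reduction and locality-sensitive hashing (LSH) for $\ell_2$. Roughly, JL reduces the dimension from $n$ to $O(\log k)$, after which LSH gives the stated space/query tradeoff in the low-dimensional space; the original vectors are retained on the side for final distance verification.

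First, I would apply a JL transform $\Pi \in \R^{d \times n}$ with $d = O(\log k / \eps_0^2)$ for a sufficiently small constant $\eps_0$, chosen so that the multiplicative distortion can be absorbed into the approximation factor without increasing $c$ beyond the stated bound. By the JL lemma applied to the $O(k^2)$ point-pair difference vectors (and separately to query-to-point differences at query time, via a union bound over the randomness of $\Pi$), with high probability all relevant $\ell_2$ distances are preserved up to a factor of $1 \pm \eps_0$. Storing the $k$ original vectors costs $O(kn)$ and storing $\Pi$ costs $O(nd) = O(n \log k)$, which accounts for the $kn$ term in the space bound and the additive $n \log^{O(1)} k$ term in the query time (to project the query once).

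Next, I would build a standard LSH data structure on the $k$ projected points in $\R^d$, using an $\ell_2$ LSH family (for example, the $p$-stable projection family of Datar--Immorlica--Indyk--Mirrokni) that is $(r, cr)$-sensitive with collision probabilities $p_1 > p_2$ satisfying $\rho := \log(1/p_1)/\log(1/p_2) \le 1/c$. The standard construction uses $L = O(k^\rho)$ hash tables, each formed by concatenating $K = O(\log k)$ base hashes. Across all tables one stores only pointers to the $k$ data points, using $O(k^{1+\rho})$ additional space beyond the $O(kn)$ for the raw vectors. Since the correct near-neighbor radius $r$ is not known in advance, I would run $O(\log k)$ parallel instances for geometrically spaced radii, which contributes the $\log^{O(1)} k$ overhead in both space and time.

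At query time, I would project the query through $\Pi$, then for each radius guess and each of the $L$ tables hash the projected query (each hash costs $\log^{O(1)} k$ time) and probe the resulting bucket, stopping after examining $O(1)$ candidates per bucket as in the standard LSH stopping rule. Each candidate is verified in $O(n)$ time against the stored original vectors, yielding a total of $(k^\rho + n) \log^{O(1)} k$ per query. The main technical input is the LSH guarantee $\rho \le 1/c$ for $\ell_2$, which I would cite from the referenced works rather than reprove; the only remaining obstacle is the bookkeeping to verify that the JL distortion does not inflate the approximation factor and that the standard LSH correctness and complexity analyses continue to apply in the projected space.
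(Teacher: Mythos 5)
The paper does not actually prove this statement---it cites \cite{har2012approximate} and remarks only that it follows from the standard combination of Johnson--Lindenstrauss dimension reduction with $\ell_2$ locality-sensitive hashing, which is exactly the construction you sketch, so your proposal matches the intended argument. The one point to tighten is that the JL distortion parameter $\eps_0$ cannot be a fixed constant when $c$ is close to $1$ (as it is where the paper invokes this theorem, with $c = 1 + \frac{s\eps^2}{32n}$): you need $\eps_0 = \Theta(c-1)$ so that the distortion is absorbed while preserving $\rho \le 1/c$, which still costs only $\log^{O(1)} k$ factors in the relevant regime since there $c - 1 = \Theta(1/(\log k)^{1/4})$.
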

Lastly, we define some convenient notation used shortly.
\begin{definition}\label{def:vector_A}
For a vector $x \in \mathbb{R}^n$ and a set $A \subseteq [n]$, let $x_A \in \R^m$ denote the vector which is equal to $x$ in coordinates of set $A$ and zero otherwise. $x(i)$ refers to the $i$th coordinate of the vector $x$. 
\end{definition}

\section{Motivating our algorithm}\label{sec:motivation}

As we discuss in the next section, our algorithm works by considering the empirical sample distribution $\hat p$ (where $\hat p_i=X(i)/s$ if $i$ is sampled $X(i)$ times) and delicately combines nearest neighbor search data structures both for $\ell_2$ and $\ell_\infty$ distance. 
Before describing our algorithm, we first explain why simpler approaches do not work. First, since we are interested in finding a close distribution in $\ell_1$ distance, it is a natural idea to simply construct an approximate $\ell_1$ nearest neighbor data structure on the known distributions and directly querying it for $\hat p$. As the following lemma illustrates, this approach fails to return a close distribution even with access to an \emph{exact} nearest neighbor data structure which returns a true nearest neighbour to the empirical distribution. Surprisingly, it can even be \emph{much worse} than simply returning a uniformly random known distribution which would succeed with probability $1/k$. All proofs in this section are deferred to~\Cref{App:omitted-proofs-motivation}.

\begin{restatable}{lemma}{lightbad}
\label{lemma:weird}
For any $k\leq \binom{n}{n/2}$, there exist distinct distributions $p,q_1,\dots,q_{k}$ over $[n]$ such that $\|p-q_i\|_1=1$ for each $i\in [k]$ and if $\hat p$ is the empirical distribution obtained from sampling $s=n/2$ elements from $p$, then the probability that $\|p-\hat p\|_1\leq \min_{i\in [k]} (\|q_i-\hat p\|_1)$ is $\exp(-\Omega(k))$. In fact, if $k\geq Cn\log n$ for a sufficiently large constant $C$, the probability is $0$.
\end{restatable}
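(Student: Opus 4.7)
The plan is to take $p$ to be the uniform distribution on $[n]$ and, for each $i$, let $q_i$ be the uniform distribution on some $(n/2)$-subset $T_i \subseteq [n]$, so $q_i(j) = 2/n$ for $j \in T_i$ and $0$ otherwise; this immediately gives $\|p-q_i\|_1 = 1$. Writing $S := \mathrm{supp}(\hat p)$ and letting $N_j$ denote the number of samples equal to $j$ (so $N_j \ge 1$ on $S$ and $\sum_j N_j = n/2$), I would expand both $\ell_1$ distances by splitting on whether $j$ lies in $T_i$ and/or in $S$; after simplifying using $\sum_{j \in S} N_j = n/2$, one obtains the clean identities
\[
\|p - \hat p\|_1 = 2 - \tfrac{2|S|}{n}, \qquad \|q_i - \hat p\|_1 = 2 - \tfrac{4\,|T_i\cap S|}{n}.
\]
Consequently the bad event $\|p - \hat p\|_1 \le \min_i \|q_i - \hat p\|_1$ is exactly $\{\forall i,\; |T_i \cap S| \le |S|/2\}$, reducing the lemma to the purely combinatorial question of whether some $T_i$ captures a strict majority of the random support $S$.

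To finish the first claim I would draw $T_1, \dots, T_k$ i.i.d.\ uniformly from all $(n/2)$-subsets of $[n]$. For any fixed nonempty $S$, $|T \cap S|$ is hypergeometric with mean $|S|/2$, and the bijection $T \leftrightarrow T^c$ gives $|T\cap S| \stackrel{d}{=} |S| - |T\cap S|$, so
\[
\Pr_T[|T\cap S| > |S|/2] = \tfrac{1}{2}\bigl(1 - \Pr_T[|T\cap S| = |S|/2]\bigr) \ge c
\]
for an absolute constant $c>0$ (the quantity $\Pr_T[|T\cap S| = |S|/2]$ is $0$ when $|S|$ is odd, is checked directly from the hypergeometric pmf in the binding case $|S| = 2$, and decays like $\binom{|S|}{|S|/2}/2^{|S|} = O(|S|^{-1/2})$ for larger even $|S|$). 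By independence of the $T_i$'s,
\[
\E_{T_1,\dots,T_k}\!\bigl[\Pr_{\hat p}[\mathrm{bad}]\bigr] = \E_{\hat p}\!\bigl[(1-q(S))^k\bigr] \le (1-c)^k = \exp(-\Omega(k)),
\]
so a deterministic choice of $T_i$'s achieves the bound, and distinctness of the $q_i$'s is available since $k \le \binom{n}{n/2}$. For the stronger ``probability zero'' claim when $k \ge Cn\log n$, I would instead pull a union bound over $S$ inside the probabilistic argument: $\Pr_{T_1,\dots,T_k}[\exists\, S \text{ uncovered}] \le 2^n (1-c)^k$, which drops below $1$ for $k$ exceeding a sufficiently large constant multiple of $n$, so a fixed choice of $T_i$'s makes the bad event impossible for every realization of $\hat p$.

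The main subtlety is pinning down the constant lower bound on $\Pr_T[|T\cap S| > |S|/2]$ uniformly across $|S| \in \{1,\dots,n/2\}$; the small cases must be checked by hand from the hypergeometric pmf, while the larger cases follow from the symmetry above combined with the standard central-binomial-coefficient estimate. Everything else is a routine verification of the two displayed $\ell_1$ identities and a standard Fubini/union-bound bookkeeping.
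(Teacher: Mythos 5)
Your proposal is correct and follows essentially the same route as the paper's proof: uniform $p$, distributions $q_i$ uniform on random $(n/2)$-subsets, the same two $\ell_1$ identities reducing the bad event to $|T_i\cap S|\le |S|/2$, the complementation symmetry $T\leftrightarrow T^c$ to get a constant per-subset win probability, and a union bound over realizations of $\hat p$ for the probability-zero claim. The only point you gloss over is guaranteeing distinctness of the $q_i$ after the probabilistic selection (the paper samples $2k$ sets and argues at least $k$ are distinct with high probability), but that is routine bookkeeping rather than a gap.
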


Another idea is to instead use an $\ell_2$ nearest neighbor data structure on the empirical distribution. Perhaps less surprisingly, this approach fails too, but whereas for $\ell_1$ the issue was the \emph{light elements}, for $\ell_2$ it is the \emph{heavy elements}. For simplicity, we consider the case of two distributions. 
\begin{restatable}{lemma}{heavybad}
\label{lemma:l2-not-working}
There exists distributions $p,q$ over $[n]$ with $\|p-q\|_1=1$ such that for any sample size $s=o(n)$, with probability $\Omega(1)$, $\|p-\hat p\|_2>  \|q-\hat p\|_2$.
\end{restatable}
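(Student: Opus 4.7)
The plan is to exhibit a single fixed pair $(p,q)$ with $\|p-q\|_1=1$ for which $\mathrm{diff}:=\|p-\hat p\|_2^2-\|q-\hat p\|_2^2$ has a tiny negative mean and comparably small standard deviation, so that for every $s=o(n)$ the event $\{\mathrm{diff}>0\}$ occurs with probability $\Omega(1)$. Expanding the squared norms, the $\|\hat p\|_2^2$ terms cancel, yielding the affine-in-$\hat p$ identity
\[
\mathrm{diff} \;=\; -\|p-q\|_2^2 \;-\; 2\langle p-q,\,\hat p-p\rangle.
\]
Under Poissonization, the $X_i\sim\mathrm{Pois}(sp_i)$ are independent, so $E[\mathrm{diff}]=-\|p-q\|_2^2$ and $\mathrm{Var}(\mathrm{diff})=\tfrac{4}{s}\sum_i (p_i-q_i)^2\,p_i$. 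It therefore suffices to design $(p,q)$ so that the signal-to-noise ratio $\|p-q\|_2^2\big/\sqrt{\mathrm{Var}(\mathrm{diff})}$ stays $O(1)$ for every $s=o(n)$, and then invoke anti-concentration.

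The construction places a $\Theta(1)$-mass ``spike'' at coordinate~$1$ with $|p_1-q_1|$ tuned to the critical scale $1/\sqrt n$. Fix $e=\tfrac{1}{2\sqrt n}$ and partition $\{2,\ldots,n\}=A\sqcup B$ with $|A|=|B|=(n-1)/2$ (take $n$ odd). Define
\begin{align*}
p_1 &= \tfrac12+e, & p_i &= \tfrac{1-2e}{n-1}\text{ on }A, & p_i &= 0 \text{ on }B,\\
q_1 &= \tfrac12-e, & q_i &= \tfrac{2e}{n-1}\text{ on }A, & q_i &= \tfrac{1}{n-1}\text{ on }B.
\end{align*}
Direct computation gives $\sum p=\sum q=1$ and $\|p-q\|_1=2e+\tfrac{1-4e}{2}+\tfrac12=1$. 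The $\ell_2$ gap $\|p-q\|_2^2 = 4e^2 + \tfrac{(1-4e)^2+1}{2(n-1)} = \Theta(1/n)$ sits at the Cauchy--Schwarz floor forced by $\|p-q\|_1=1$. Crucially, the $p$-weighted second moment $\sum_i (p_i-q_i)^2 p_i$ is dominated by the \emph{single} term at $i=1$: $(2e)^2 p_1 = \Theta(1/n)$, while the contributions from $A$ total only $O(1/n^2)$ and those from $B$ vanish because $p_i=0$ there. Hence $\mathrm{Var}(\mathrm{diff})=\Theta(1/(ns))$ and the signal-to-noise ratio is $\Theta(\sqrt{s/n})=o(1)$ for every $s=o(n)$.

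Because the variance is concentrated in the single Poisson coordinate $X_1\sim\mathrm{Pois}(sp_1)$ with $sp_1\asymp s$, the event $\{\mathrm{diff}>0\}$ reduces (up to lower-order independent noise of magnitude $O(1/n)$) to $X_1 < sp_1 - s/\sqrt n$. For $s\to\infty$, the Poisson CLT gives that the standardized shift $\sqrt{2s/n}\to 0$, so $\Pr[\mathrm{diff}>0]\to\tfrac12$; for bounded $s$ the event becomes $X_1\le\lfloor sp_1\rfloor$ (since $s/\sqrt n=o(1)$), whose probability is an explicit positive constant from the Poisson PMF (e.g. $e^{-1/2}$ when $s=1$). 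Either way $\Pr[\mathrm{diff}>0]=\Omega(1)$.

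The main obstacle is producing \emph{one} fixed pair that works across the entire range $s=o(n)$. A naive uniform-vs-uniform-on-half construction has $\sum (p_i-q_i)^2 p_i = \Theta(1/n^2)$, giving signal-to-noise $\Theta(\sqrt s)$ which only handles $s=O(1)$. The fix is the $\Theta(1)$-mass spike at coordinate~$1$ with $|p_1-q_1|$ at the critical scale $\Theta(1/\sqrt n)$: this boosts $\sum (p_i-q_i)^2 p_i$ by a factor of $n$ without enlarging $\|p-q\|_2^2$, shifting the signal-to-noise ratio to $\Theta(\sqrt{s/n})$, which is $o(1)$ across the whole range $s=o(n)$.
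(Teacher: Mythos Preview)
Your proof is correct and in spirit follows the same ``heavy element'' idea as the paper, but there are two genuine differences worth noting.

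First, the paper's construction lets $q$ depend on $s$: it sets $q(1)=\tfrac12+\tfrac{1}{\sqrt{s}}$, so strictly speaking it proves ``for every $s=o(n)$ there exist $p,q$'' rather than the quantifier order in the lemma statement. Your construction fixes $|p_1-q_1|=1/\sqrt{n}$ independently of $s$, and your signal-to-noise computation shows that the \emph{same} pair $(p,q)$ witnesses the failure across the entire range $s=o(n)$. This is a real strengthening, and you correctly identify why it works: putting the spike difference at scale $1/\sqrt{n}$ (rather than $1/\sqrt{s}$) makes $\sum_i(p_i-q_i)^2 p_i=\Theta(1/n)$, so the ratio $|E[\mathrm{diff}]|/\sqrt{\mathrm{Var}(\mathrm{diff})}=\Theta(\sqrt{s/n})$ is uniformly $o(1)$.

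Second, your affine identity $\mathrm{diff}=-\|p-q\|_2^2-2\langle p-q,\hat p-p\rangle$ is a cleaner route than the paper's, which separately argues $\sum_{i\ge 2}(X(i)-sp(i))^2\approx\sum_{i\ge 2}(X(i)-sq(i))^2\approx s/2$ and then reduces to the event $X(1)\ge s/2+\sqrt{s}$. Your decomposition immediately isolates the single Poisson coordinate carrying essentially all the variance and makes the anti-concentration step transparent. The paper's argument buys some concreteness (an explicit Binomial-type event), whereas yours buys uniformity in $s$ and a more quantitative mean/variance picture.
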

Finally, it should be clear that $\ell_\infty$ nearest neighbor search on the empirical distribution $\hat p$ does not work with $s=o(n)$ samples. Indeed, consider distributions where all probabilities are either $\frac{2}{n}$ or $0$ and the most sampled coordinate $i$. Even with an exact nearest neighbor data structure, we could return an arbitrary distribution $q$ such that $q(i)=2/n$.

\section{Main algorithm and proof intuition}\label{sec:main_algo}
As motivated by the previous section, our algorithm deals with `heavy' and `light' elements separately. The first challenge is to formally define the notions of heavy and light elements. The most natural choice is to declare a domain element heavy for a distribution $v$ if the probability mass $v$ places on the element is larger than some threshold $\gamma$. An issue arises when we want to employ this definition for $p$, the distribution we are sampling from. Since we only get sample access to $p$, we will have tiny, but non-negligible, error on estimates of the probabilities of $p$, which introduces some ambiguity on the exactly partition of the domain into heavy and light elements. Thus, we are motivated seek an unequivocal definition of heavy and light. 

Towards this goal, we can define a heavy and light partitioning of $[n]$ according to one of the fixed distributions $v_i$, which we know fully, via Algorithm \ref{alg:heavy_light_decomposition}. It partitions the domain $[n]$ with respect to a fixed threshold $\gamma$ based on the probability masses of $v_i$. When we sample from $p$, we wish to use one of these partitions of the domains (obtained from inputting the $v_i$'s into Algorithm \ref{alg:heavy_light_decomposition}) as the definition of heavy and light elements for $p$. However, we still wish to ensure that any element that we define as heavy for $p$ does not possess probability mass significantly greater than $\gamma$. Lemma \ref{lem:p_p_hat_l_infinity} proves that $\hat{p}$ and $p$ are $\tilde{O}(1/\sqrt{n})$ close in $\ell_{\infty}$, where $\hat{p}$ is the empirical distribution after sampling from $p$.
Let $v^{\infty}$ denote the closest distribution to $\hat{p}$ among the $\{v_i\}_{i=1}^k$ in $\ell_{\infty}$ distance.
As there exists a choice of $v^\infty$ (namely $v^*$) which is $O(1/\sqrt{n})$ close to $p$ (and $\hat p$) in $\ell_\infty$, $\|v^\infty - p\|_\infty \leq O(1/\sqrt{n})$.
Hence, if we use the heavy/light partitioning of $v^{\infty}$, then every domain element classified as heavy for $p$ has probability mass at least $\gamma - O(1/\sqrt{n})$ and no light element has mass more than  $\gamma + O(1/\sqrt{n})$.

The above discussion naturally leads us to our preprocessing algorithm, Algorithm \ref{alg:preprocessing}. In Algorithm \ref{alg:preprocessing}, we first group all the known distributions $\{v_i\}_{i=1}^{k}$ into $k$ groups such that the $i$th group consists of all distributions which are close to $v_i$ in $\ell_{\infty}$ distance. Let $v^{\infty}$ denote the closest distribution to $\hat{p}$ in $\ell_{\infty}$ and consider $v^{\infty}$'s group, denoted as $S_{\infty}$. We essentially use the heavy / light partitioning of the domain induced by $v^{\infty}$ as the definition of heavy or light for $p$. All distributions in $v^{\infty}$'s group will also share the same heavy / light partitioning as $v^{\infty}$. By adjusting the log factors, we can ensure that $v^*$ is also included in $v^{\infty}$'s group. Hence, we only need to consider the distributions in $S_{\infty}$.

The key observation is that we can now \emph{completely} disregard the heavy elements. This is because $p$ will be sufficiently close to \emph{all} distributions in $S_{\infty}$ in $\ell_1$ distance, restricted to the heavy elements. This is formalized in Lemma \ref{lem:p_close_l_1_heavy}.

Thus, it suffices to consider the distributions in $S_{\infty}$, restricted to the light domain elements. It turns out that $\ell_2$ is a suitable metric to use on the light elements. Therefore, our main algorithm simply performs an $\ell_2$ nearest neighbor using query $\hat{p}$ and searches for the closest distribution in $S_{\infty}$ to $\hat{p}$, restricted to the light domain elements. Since we know $v^*$ is in $S_{\infty}$, we can guarantee that we will find a distribution which is `close' in $\ell_1$ distance on the light elements. We conclude by combining the fact that all distributions in $S_{\infty}$ are close to $p$ on the heavy elements already, and thus the total $\ell_1$ error must be small.

To summarize, the overall algorithm outline is the following. In the preprocessing step, we create an $\ell_{\infty}$ data structure on all the known distributions $\{v_i\}_{i=1}^k$. Then we group all of the known distributions into $k$ groups, one for each $v_i$, with the property that all distributions in a fixed group are $\tilde{O}(1/\sqrt{n})$ close in $\ell_{\infty}$ distance. We additionally instantiate an $\ell_2$ nearest neighbor data structure within each group, but only on the light domain elements, which is defined consistently within each group (using $v_i$ for $v_i$'s group). See Algorithm \ref{alg:preprocessing} for details. 

For the final algorithm, presented formally in Algorithm \ref{alg:main_algo}, we take $s$ samples from the unknown distribution $p$ and let $\hat{p}$ denote the empirical distribution. We query the $\ell_{\infty}$ nearest neighbor search data structure on $\hat{p}$ which helps us narrow down to a fixed group of distributions close on the heavy elements, as defined in the preprocessing stage. We then query the $\ell_2$ NNS data structure for this group on $\hat{p}$, but restricted to the light domain elements in the group\footnote{To ensure independence, we actually query using an empirical distribution using fresh samples.}. The returned distribution is our output for the closest distribution to $p$. The main guarantee of our algorithms is given below.
\begin{restatable}{theorem}{mainthm}
\label{thm:main}
Set $s = \Theta\left(\frac{n}{\eps^2(\log k)^{1/4}}\right) = o(n)$ in Algorithm \ref{alg:main_algo} and $\gamma = 1/n^{5/12}$ in Algorithm \ref{alg:preprocessing}. Let $\tilde{v}$ denote the output of Algorithm \ref{alg:main_algo}. Then the preprocessing algorithm, Algorithm \ref{alg:preprocessing}, runs in time polynomial in $k$, requires $\tilde{O}(nk^2)$ space, and the query time of Algorithm \ref{alg:main_algo} is $\tilde{O}(n) + k^{1 - 1/(\log k)^{1/4}} = o(k)$. Furthermore, with probability $1-o(1)$, Algorithm \ref{alg:main_algo} returns distribution $\tilde{v}$ satisfying $\|p - \tilde{v}\|_1 \le \eps$.
\end{restatable}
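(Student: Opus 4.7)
The plan is to bound the $\ell_1$ error separately on the heavy coordinates $H$ and the light coordinates $L$ of the partition induced by the $\ell_\infty$ winner $v^\infty$, then to read off the resource bounds from Theorems~\ref{thm:l_infinity_nns} and~\ref{thm:l2_nns}. Step one is to show $v^*\in S_\infty$. The assumption $\|v^*-p\|_2\leq \eps/(2\sqrt n)$ implies $\|v^*-p\|_\infty \le \eps/(2\sqrt n)$; Lemma~\ref{lem:p_p_hat_l_infinity} gives $\|\hat p-p\|_\infty = \tilde O(1/\sqrt n)$; combining, $\|v^*-\hat p\|_\infty = \tilde O(1/\sqrt n)$. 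Invoking Theorem~\ref{thm:l_infinity_nns} with $\delta=o(1)$ yields an output $v^\infty$ within $\tilde O(1/\sqrt n)$ of $\hat p$ in $\ell_\infty$, and calibrating the grouping radius in Algorithm~\ref{alg:preprocessing} by a matching polylog factor ensures $v^*$ lies in the group $S_\infty$.

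Step two controls the heavy part. Since every $v\in S_\infty$ agrees with $v^\infty$ coordinatewise up to $\tilde O(1/\sqrt n)$ on $H$, and $|H|\le 1/\gamma = n^{5/12}$, Lemma~\ref{lem:p_close_l_1_heavy} yields $\|v_H-p_H\|_1 = \tilde O(|H|/\sqrt n) = \tilde O(n^{-1/12}) = o(\eps)$ uniformly over $S_\infty$.

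Step three, the main obstacle, is the light part. Running the $\ell_2$ NNS of Theorem~\ref{thm:l2_nns} on $\hat p_L$ inside $S_\infty$ (with a fresh independent sample) and approximation factor $c=1+O((\log k)^{-1/4})$ returns $\tilde v\in S_\infty$ with $\|\tilde v_L-\hat p_L\|_2^2 \le c^2\|v^*_L-\hat p_L\|_2^2$. The naive combination of triangle inequality and $\|\cdot\|_1\leq \sqrt n\|\cdot\|_2$ loses an extra $(\log k)^{1/8}$ factor for our choice of $s$, so the key step is a bias-correction argument exploiting the Poissonization identity
\[
\E_{\hat p}\!\bigl[\|\hat p_L-v_L\|_2^2\bigr] = \|p_L-v_L\|_2^2 + \tfrac{1}{s}\sum_{i\in L} p(i)(1-p(i)),
\]
in which the $\tfrac{1}{s}$ correction is $v$-independent. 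Subtracting it from both sides of the NNS inequality, while using Bernstein/Chebyshev to handle the fluctuation of $\|\hat p_L-p_L\|_2^2$ around its mean and the cross-terms $\langle \hat p_L-p_L,\,\tilde v_L-v^*_L\rangle$ (which are small because both $\tilde v_L$ and $v^*_L$ have entries bounded by $\gamma+\tilde O(1/\sqrt n)$), leaves
\[
\|\tilde v_L-p_L\|_2^2 \lesssim c^2\|v^*_L-p_L\|_2^2 + (c^2-1)/s \lesssim \eps^2/n,
\]
where we used $\|v^*_L-p_L\|_2^2\leq \eps^2/(4n)$, $c^2-1 = O((\log k)^{-1/4})$, and the choice $s=\Theta(n/(\eps^2(\log k)^{1/4}))$, so that $1/(s(\log k)^{1/4}) = \eps^2/n$. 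Cauchy--Schwarz then gives $\|\tilde v_L-p_L\|_1 \le \sqrt n\,\|\tilde v_L-p_L\|_2 = O(\eps)$.

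Summing the heavy and light $\ell_1$ bounds yields $\|\tilde v-p\|_1\le \eps$ after rescaling constants. The space/time bounds follow by plugging $\delta=o(1)$ into Theorem~\ref{thm:l_infinity_nns} and $\rho=1-1/(\log k)^{1/4}$ (so $c\ge 1/\rho = 1+O((\log k)^{-1/4})$) into Theorem~\ref{thm:l2_nns}: preprocessing is dominated by the $k$ group-wise $\ell_2$ indices at $k\cdot(k^{1+\rho}+kn)\log^{O(1)}k = \tilde O(nk^2)$ space, and the query cost is $\tilde O(n)$ for $\ell_\infty$ plus $(k^\rho+n)\log^{O(1)}k = \tilde O(n)+k^{1-1/(\log k)^{1/4}} = o(k)$ since $k\ge n^{1.01}$.
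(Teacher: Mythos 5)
Your proposal follows essentially the same route as the paper's proof: localize to the group $S_\infty$ containing $v^*$ via the $\ell_\infty$ query, dismiss the heavy coordinates using the $\ell_\infty$ closeness within the group together with $|H|\le 1/\gamma$, and on the light coordinates compare the $\ell_2$ statistics of $\tilde v$ and $v^*$ by exploiting that the Poisson bias term is candidate-independent, with Chebyshev for $v^*$ and a Bernstein bound (relying on the $O(\gamma)$ coordinate bound, union-bounded over the up to $k$ far candidates in the group) for the rest -- your additive ``bias-correction'' presentation is just the paper's ratio argument rearranged. The only slips are minor: under Poissonization the bias is $\frac{1}{s}\sum_{i\in L}p(i)$ rather than $\frac{1}{s}\sum_{i\in L}p(i)(1-p(i))$ (harmless, since it is $v$-independent either way), and the claim $k\cdot(k^{1+\rho}+kn)\log^{O(1)}k=\tilde O(nk^2)$ does not literally hold for $\rho$ near $1$ since $k^{2+\rho}\gg nk^2$ when $k\ge n^{1.01}$ (an accounting the paper itself glosses over).
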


In summary, we use polynomial preprocessing time, $s = o(n)$ samples, and our query time is $o(k)$, with the latter two quantities being sublinear in the domain size and the number of distributions, respectively. We present auxiliary lemmas in Sections \ref{sec:heavy_proofs} and \ref{sec:light_proofs} (proofs in~\Cref{app:heavy-lemmas,app:light-lemmas}, respectively) and prove Theorem \ref{thm:main} in~\Cref{sec:main_thm_proof}.

\begin{algorithm}[!htb]
\caption{\label{alg:heavy_light_decomposition} Heavy light decomposition }
\begin{algorithmic}[1]
\State \textbf{Input}: An ordered list of distributions $q_1, \ldots, q_t$
\State \textbf{Output}: Vectors $(q_i)_H, (q_i)_L$ for every $i \in [t]$ (recall Definition \ref{def:vector_A})
\Procedure{Heavy-Light}{$\{q_i\}_{i=1}^t$}
\State $H, L \gets \emptyset$
\For{$j = 1$ to $n$}
\If{$q_1(j) \ge \gamma$}
\Comment{We use $q_1$ to define the heavy and light elements for all $q_1, \ldots, q_t$.}
\State{$H \gets H \cup \{j\}$}
\Else{}
\State{$L \gets L \cup \{j\}$}
\EndIf
\Comment{$H \cup L$ is a disjoint partition of $[n]$.}
\EndFor
\State \textbf{Return:} vectors $\{(q_i)_H, (q_i)_L\}_{i=1}^t$ 
\Comment{We use the distribution $q_1$ to set the heavy and light domain elements for all other distributions.}
\EndProcedure
\end{algorithmic}
\end{algorithm}

\begin{algorithm}[!htb]
\caption{\label{alg:preprocessing}Preprocessing}
\begin{algorithmic}[1]
\State \textbf{Input}: Distributions $\{v_i\}_{i=1}^k$ according to Section \ref{sec:prelim}
\State \textbf{Output}: An $\ell_{\infty}$ data structure, a number of $\ell_{2}$ NNS data structures such that each $v_i$ is associated with a unique $\ell_2$ data structure
\Procedure{Preprocessing}{$\{v_i\}_{i=1}^k$}
\State $D^{\infty} \gets \ell_{\infty}$ data structure  on $\{v_i\}_{i=1}^k$ 
\For{all $i \in [k]$}
\State $S_i \gets \{v' \in \{v_i\}_{i=1}^k \mid \|v_i - v'\|_{\infty} \le O((\log n)^2 \cdot (\log \log n)/ \sqrt{n}) \}$
\EndFor
\For{$j = 1$ to $k$}
\State Write $S_j = \{w_1, \ldots, w_t\}$ where $w_1 = v_j$
\State $\{(w_i)_H, (w_i)_L\}_{i=1}^t \gets$ \texttt{Heavy-Light}$(S_j)$ 
\State $D_j \gets \ell_{2}$ NNS data structure on $\{ (w_i)_L\}_{i=1}^t$
\Comment{$S_j$ corresponds to a well defined partition of $[n]$ into heavy and light elements based on $v_j$ via Algorithm \ref{alg:heavy_light_decomposition}}
\EndFor
\State \textbf{Return:} $D^{\infty}$, data structures $\{D_j\}_{j=1}^k$
\EndProcedure
\end{algorithmic}
\end{algorithm}

\begin{algorithm}[!htb]
\caption{\label{alg:main_algo} Sublinear Time Hypothesis Selection}
\begin{algorithmic}[1]
\State \textbf{Input}: Distributions $\{v_i\}_{i=1}^k$; preprocessed data structures $D^{\infty}, \{D_j\} \gets \texttt{Preprocessing}(\{v_i\}_{i=1}^k)$; $\text{Poi}(s)$ samples from query distribution $p$
\State \textbf{Output}: A distribution $v_j$
\Procedure{Sublinear-Hypothesis-\\Selection}{$\{v_i\}_{i=1}^k$}
\State $\hat{p} \gets$ empirical distribution of the first half of the samples from $p$
\State $v^{\infty} \gets$ output of $D^{\infty}$ on query $\hat{p}$ with approximation $c = O(\log(n) \cdot \log \log n) $
\State $D' \gets$ $\ell_2$ NNS data structure corresponding to $v^{\infty}$
\State $L \gets$ the light domain elements for $D'$ 
\State $\hat{p}' \gets$ empirical distribution of the second half of the samples from $p$
\State $\tilde{v} \gets$ output of $D'$ on $\hat{p}'_L$ with approximation $c = 1 + \frac{s \eps^2}{32n}$
\State \textbf{Return:} $\tilde{v}$
\EndProcedure
\end{algorithmic}
\end{algorithm}

\subsection{Auxiliary lemma for heavy elements}\label{sec:heavy_proofs}
Lemma \ref{lem:num_heavy_elems} shows that any pair of distributions in a fixed group, as defined in the preprocessing step, are close in $\ell_1$ distance, when the $\ell_1$ distance is restricted to the heavy domain elements in the group.

\begin{restatable}{lemma}{numheavyelemes}\label{lem:num_heavy_elems}
    Suppose $\gamma = 1/n^C$ in Algorithm \ref{alg:heavy_light_decomposition}. Consider the sets $S_j = \{w_1, \ldots, w_t\}$ defined in line 9 of Algorithm \ref{alg:preprocessing}. Consider the vectors $(w_i)_{H}$, which are the heavy subsets of the distributions in $S_j$, as defined in line 10 of Algorithm \ref{alg:preprocessing}. Then for all $j$ and for all $w,w' \in S_j$, we have \newline
$\|w_H - w'_H\|_1 \le O\left(\frac{(\log n) \log \log n}{n^{1/2 - C}} \right).$
\end{restatable}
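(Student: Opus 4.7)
The plan is to prove this by combining two observations: a pigeonhole bound on the size of $H$, and the $\ell_\infty$ closeness of all distributions in $S_j$ to $v_j$.

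First I would observe that $|H|$ is small. By construction in \texttt{Heavy-Light}, we have $H = \{i \in [n] : w_1(i) \geq \gamma\}$ where $w_1 = v_j$. Since $v_j$ is a probability distribution, $\sum_{i \in H} v_j(i) \leq 1$, and each term in the sum is at least $\gamma = 1/n^C$. Thus $|H| \leq 1/\gamma = n^C$.

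Next, for any $w, w' \in S_j$, the definition of $S_j$ in line 6 of Algorithm \ref{alg:preprocessing} gives $\|v_j - w\|_\infty$ and $\|v_j - w'\|_\infty$ both bounded by $O((\log n)^2 (\log \log n)/\sqrt{n})$. Applying the triangle inequality yields
\[\|w - w'\|_\infty \leq 2 \cdot O((\log n)^2 (\log \log n)/\sqrt{n}) = O((\log n)^2 (\log \log n)/\sqrt{n}).\]
Now I would bound $\|w_H - w'_H\|_1$ by Holder's inequality (or simply by summing coordinate-wise):
\[
\|w_H - w'_H\|_1 = \sum_{i \in H} |w(i) - w'(i)| \leq |H| \cdot \|w - w'\|_\infty.
\]
Plugging in $|H| \leq n^C$ and the $\ell_\infty$ bound gives the desired conclusion (possibly off by a single log factor from the statement, which I suspect is a minor discrepancy with the exact constant in the $S_j$ definition).

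There is essentially no hard step: this is really just the observation that the heavy set cannot be large (because it is defined relative to a distribution) combined with the trivial $\ell_1 \leq |H| \cdot \ell_\infty$ inequality on a coordinate subset of size $|H|$. The only subtle point is that the heavy set is defined consistently for \emph{all} members of $S_j$ using $w_1 = v_j$, which is exactly what lets us apply the pigeonhole argument using $v_j$'s total mass being $1$, rather than having to separately bound the heavy set sizes of every $w \in S_j$.
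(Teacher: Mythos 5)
Your proof is correct and matches the paper's argument exactly: bound $|H|\le 1/\gamma = n^C$ by pigeonhole on $v_j$'s total mass, then apply $\|w_H-w'_H\|_1\le |H|\cdot\|w-w'\|_\infty$ with the $\ell_\infty$ closeness guaranteed by the definition of $S_j$. The log-factor discrepancy you noticed is present in the paper itself (line 6 of the preprocessing algorithm uses $(\log n)^2\log\log n/\sqrt n$ while the lemma statement carries only $(\log n)\log\log n$), and it is immaterial since the bound is only ever used as $o(1)$.
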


Lemma \ref{lem:opt_in_group} shows that $v^*$, the distribution close to $p$ as defined in Section \ref{sec:prelim}, must belong to the same group as $v^{\infty}$, the distribution returned after querying $\hat{p}$ in the $\ell_{\infty}$ data structure in Algorithm \ref{alg:main_algo}.

\begin{restatable}{lemma}{optingroup}\label{lem:opt_in_group}
    Suppose $s \ge \Omega(n / (\log n)^{1/2})$. Consider the output $v^{\infty}$ on $\hat{p}$ when inputted into the $\ell_{\infty}$ data structure $D^{\infty}$, as done in line $6$ of Algorithm \ref{alg:main_algo}. Let $S_{\infty}$ denote the group of $v^{\infty}$ from Algorithm \ref{alg:preprocessing}. Assuming the event in Lemma \ref{lem:p_p_hat_l_infinity} holds, it must be that $v^* \in S_{\infty}$.
\end{restatable}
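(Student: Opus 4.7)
The plan is to prove $v^* \in S_\infty$ by verifying the membership condition of $S_\infty$ from line~6 of Algorithm~\ref{alg:preprocessing} directly, namely $\|v^\infty - v^*\|_\infty \leq O((\log n)^2 (\log \log n)/\sqrt{n})$. The argument is a triangle inequality in $\ell_\infty$ that pivots through $\hat{p}$, combining three facts: (i) $v^*$ is very close to $p$, (ii) $p$ is close to $\hat{p}$ thanks to Lemma~\ref{lem:p_p_hat_l_infinity} (which we are allowed to assume holds), and (iii) $v^\infty$ is a $c$-approximate $\ell_\infty$ nearest neighbor of $\hat{p}$ for $c = O(\log n \cdot \log\log n)$.

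First I would bound $\|v^* - \hat{p}\|_\infty$. Since $\|x\|_\infty \leq \|x\|_2$ for any vector $x \in \mathbb{R}^n$, the assumption $\|p - v^*\|_2 \leq \eps/(2\sqrt n)$ from Section~\ref{sec:prelim} gives $\|p - v^*\|_\infty \leq \eps/(2\sqrt n) = O(1/\sqrt n)$. Combining with Lemma~\ref{lem:p_p_hat_l_infinity}, which under the hypothesis $s \gtrsim n/(\log n)^{1/2}$ yields $\|p - \hat{p}\|_\infty \leq \tilde{O}(1/\sqrt n)$, the triangle inequality gives $\|v^* - \hat{p}\|_\infty \leq \tilde{O}(1/\sqrt n)$.

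Next I would use the guarantee on $v^\infty$. Because $v^\infty$ is the answer returned by the $c$-approximate $\ell_\infty$ NNS data structure on $\hat{p}$ and $v^* \in \{v_i\}_{i=1}^k$ is a feasible candidate,
\[
\|v^\infty - \hat{p}\|_\infty \;\leq\; c \cdot \min_i \|v_i - \hat{p}\|_\infty \;\leq\; c \cdot \|v^* - \hat{p}\|_\infty \;\leq\; O(\log n \cdot \log\log n) \cdot \tilde{O}(1/\sqrt n).
\]
A second application of the triangle inequality then yields
\[
\|v^\infty - v^*\|_\infty \;\leq\; \|v^\infty - \hat{p}\|_\infty + \|\hat{p} - v^*\|_\infty \;\leq\; O((\log n)^2 (\log \log n)/\sqrt n),
\]
which is exactly the threshold defining $S_\infty$ in Algorithm~\ref{alg:preprocessing}, so $v^* \in S_\infty$.

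The proof is essentially bookkeeping rather than a genuine obstacle: the only thing to watch is that the $\tilde{O}(\cdot)$ in Lemma~\ref{lem:p_p_hat_l_infinity} hides polylogarithmic factors that, when multiplied by the $\ell_\infty$ approximation factor $c = O(\log n \cdot \log\log n)$, still fit under the $(\log n)^2 (\log \log n)$ slack budgeted in the definition of $S_i$. The condition $s \geq \Omega(n/(\log n)^{1/2})$ in the lemma statement is there precisely to keep these logarithmic factors under control in step (ii).
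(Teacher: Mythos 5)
Your proof is correct and follows essentially the same route as the paper's: bound $\|\hat p - v^*\|_\infty$ via the $\ell_2$-closeness of $v^*$ to $p$ and Lemma~\ref{lem:p_p_hat_l_infinity}, apply the $c$-approximate $\ell_\infty$ NNS guarantee, and close with the triangle inequality, checking that $c\cdot(\log n)^{3/4}/\sqrt{n}$ fits under the $(\log n)^2(\log\log n)/\sqrt{n}$ threshold defining $S_i$. The paper just makes the hidden polylog explicit, obtaining $O((\log n)^{1.75}(\log\log n)/\sqrt{n})$, exactly the bookkeeping you flag.
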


The final~\ref{lem:p_close_l_1_heavy} below argues that $S_{\infty}$, the group of $v^{\infty}$ in the preprocessing algorithm, has the property that $p$ must be close in $\ell_{\infty}$ distance to every member of $S_{\infty}$. Consequently, $p$ must be close to every distribution in $S_{\infty}$ in $\ell_1$ distance, restricted to the heavy elements of the group.

\begin{restatable}{lemma}{pcloseloneheavy}\label{lem:p_close_l_1_heavy}
    Consider the same setting as in Lemma \ref{lem:opt_in_group} and suppose $\gamma = 1/n^C$. Let $H$ denote the heavy domain elements of group $S_{\infty}$. Then for all $w \in S_{\infty}$, we have
$ \|p - w\|_{\infty} \le O\left( \frac{(\log n)^2  \log \log n}{\sqrt{n}} \right). $
Consequently, we also have
$ \|p_H - w_H\|_{1} \le O\left( \frac{(\log n)^2  \log \log n}{n^{1/2 - C}} \right).$
\end{restatable}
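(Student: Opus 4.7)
The plan is to derive the $\ell_\infty$ bound by a triangle inequality chaining three close pairs, and then to convert it to the $\ell_1$ bound on $H$ by observing that the number of heavy coordinates is tiny.

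First, I would assemble the $\ell_\infty$ bound. By hypothesis $v^* \in \{v_i\}$ satisfies $\|p - v^*\|_2 \le \eps/(2\sqrt{n})$, and since $\|\cdot\|_\infty \le \|\cdot\|_2$, also $\|p - v^*\|_\infty \le \eps/(2\sqrt{n})$. Combining with Lemma \ref{lem:p_p_hat_l_infinity}, we get $\|\hat p - v^*\|_\infty \le \tilde O(1/\sqrt n)$. Since $v^\infty$ is returned by $D^\infty$ with approximation $c = O(\log n \cdot \log\log n)$, we obtain
\[
\|\hat p - v^\infty\|_\infty \;\le\; c \cdot \|\hat p - v^*\|_\infty \;\le\; O\!\left(\tfrac{(\log n)^2 \log\log n}{\sqrt n}\right).
\]
For any $w \in S_\infty$, the definition of $S_\infty$ in Algorithm \ref{alg:preprocessing} gives $\|v^\infty - w\|_\infty \le O((\log n)^2 \log\log n / \sqrt n)$. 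A final triangle inequality $\|p - w\|_\infty \le \|p - \hat p\|_\infty + \|\hat p - v^\infty\|_\infty + \|v^\infty - w\|_\infty$ yields the claimed $\ell_\infty$ bound, since each term is $\tilde O(1/\sqrt n)$.

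Next I would convert this to the $\ell_1$ bound restricted to $H$. The heavy set $H$ is defined by $v^\infty(j) \ge \gamma = 1/n^C$, so since $v^\infty$ is a probability distribution, $|H| \le 1/\gamma = n^C$. Therefore
\[
\|p_H - w_H\|_1 \;\le\; |H| \cdot \|p - w\|_\infty \;\le\; n^C \cdot O\!\left(\tfrac{(\log n)^2 \log\log n}{\sqrt n}\right) = O\!\left(\tfrac{(\log n)^2 \log\log n}{n^{1/2 - C}}\right),
\]
which is exactly the second conclusion.

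There isn't really a hard step here; the lemma is a clean composition of (i) the sampling concentration from Lemma \ref{lem:p_p_hat_l_infinity}, (ii) the approximation guarantee of the $\ell_\infty$ NNS data structure with $v^*$ as the witness near neighbor, and (iii) the $\ell_\infty$-radius definition of $S_\infty$. The only mildly subtle point is remembering to absorb the extra $\log n$ factor from the $c$-approximation into the final $(\log n)^2 \log\log n$ expression, and to use the trivial bound $|H| \le 1/\gamma$ rather than a coordinate-wise argument; both are routine.
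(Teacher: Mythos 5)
Your proof is correct and follows essentially the same route as the paper's: the same triangle-inequality chain through $\hat p$ and $v^\infty$ (using $v^*$ as the witness for the $\ell_\infty$ NNS approximation guarantee and the radius of $S_\infty$ from the preprocessing), followed by the bound $|H|\le 1/\gamma = n^C$ to pass from $\ell_\infty$ to $\ell_1$ on the heavy coordinates. No gaps.
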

Essentially, the lemmas above ensure that $v^*$ will be in the group $S_\infty$ which in turn has the property that for all the distributions are close in $\ell_1$ on the heavy elements. In order to actually find the closest distribution in $\ell_1$, we have to switch our attention to the light elements within the group and for this we need the lemmas of the next section.

\subsection{Auxiliary lemmas for light elements}\label{sec:light_proofs}
We now state auxiliary lemmas concerning light elements. Recall the definition of $L$ from line $8$ in Algorithm \ref{alg:main_algo}. We first show the expected value of $\|s \cdot \hat{p}'_L-s \cdot v_L\|_2$ captures the $\ell_2$ distance between $p$ and $v$, restricted to the light elements. Note that we are using the empirical distribution $\hat{p}'$ which does not share any samples with $\hat{p}$, insuring the independence of $L$ and $\hat{p}'$.  We also define $T$ in the lemma below as $T = \sum_{i \in L } p(i)$.

\begin{restatable}{lemma}{lightEV}
\label{lem:light_EV}
$\mathbb{E}[\|s \cdot \hat{p}'_L-s \cdot v_L\|_2^2] = s \cdot T + s^2 \|p_L-v_L\|_2^2.$
\end{restatable}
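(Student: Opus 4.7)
The plan is to prove this by a direct bias-variance decomposition exploiting the Poissonization trick from the preliminaries.

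First, I would emphasize the crucial independence structure: the set $L$ is determined by $v^{\infty}$, which is computed from the \emph{first} half of the samples $\hat{p}$, while $\hat{p}'$ is built from the \emph{second} (independent) half. So I may condition on $L$ and take the expectation over the second batch only; this makes $L$ effectively a fixed subset of $[n]$ for the remainder of the argument. Invoking Poissonization, for every $i \in [n]$ the count $s \cdot \hat{p}'(i)$ is distributed as $\mathrm{Pois}(s \cdot p(i))$, and these counts are mutually independent across $i$.

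Next, I would use the fact that for a $\mathrm{Pois}(\lambda)$ random variable $Y$, we have $\mathbb{E}[Y] = \mathrm{Var}(Y) = \lambda$. Applying the bias-variance identity to a single coordinate $i \in L$,
\begin{align*}
\mathbb{E}\bigl[(s \cdot \hat{p}'(i) - s \cdot v(i))^2\bigr]
&= \mathrm{Var}(s \cdot \hat{p}'(i)) + \bigl(\mathbb{E}[s \cdot \hat{p}'(i)] - s \cdot v(i)\bigr)^2 \\
&= s \cdot p(i) + s^2 (p(i) - v(i))^2.
\end{align*}
For coordinates $i \notin L$, both $s \cdot \hat{p}'_L(i)$ and $s \cdot v_L(i)$ are $0$ by Definition \ref{def:vector_A}, so they contribute nothing to the squared $\ell_2$ distance.

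Finally, I would sum the per-coordinate contributions over $i \in L$, using linearity of expectation (no independence needed here, though we have it),
\begin{align*}
\mathbb{E}\bigl[\|s \cdot \hat{p}'_L - s \cdot v_L\|_2^2\bigr]
&= \sum_{i \in L} \bigl( s \cdot p(i) + s^2 (p(i) - v(i))^2 \bigr) \\
&= s \cdot \sum_{i \in L} p(i) + s^2 \sum_{i \in L} (p(i) - v(i))^2 \\
&= s \cdot T + s^2 \|p_L - v_L\|_2^2,
\end{align*}
which is the desired identity. There is no real obstacle here; the only subtle point is making sure the reader sees that $L$ is independent of $\hat{p}'$ so that the Poisson structure of $s \cdot \hat{p}'(i)$ remains intact when we restrict to $L$. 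The bulk of the proof is just the standard $\mathbb{E}[(Y-c)^2] = \mathrm{Var}(Y) + (\mathbb{E}[Y]-c)^2$ identity applied coordinatewise.
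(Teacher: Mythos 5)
Your proof is correct and follows essentially the same route as the paper's: both reduce to the per-coordinate Poisson moment computation (your bias-variance identity is just a repackaging of the paper's expansion of the square using $\mathbb{E}[X(i)^2]=sp_L(i)+(sp_L(i))^2$), and your remark on the independence of $L$ from $\hat p'$ matches the paper's own justification in the surrounding text.
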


We now derive concentration of the estimator around its expected value. First we consider the case where $v$ is sufficiently close to $p$.

\begin{restatable}{lemma}{lightconcentrationone}\label{lem:light_concentration_1}
   Suppose $p$ satisfies $\|p_L-v_L\|_2  \le  \frac{\eps}{2\sqrt{n}}$. Set $t = \frac{s^2 \|p_L-v_L\|_2^2}{4} + \frac{s^2 \eps^2}{4n}$
and $Z_1 = \|s \cdot \hat{p}'_L-s \cdot v_L \|_2^2$. If $s = \Omega\left(\max\left(\frac{n \|p_L\|_2}{\eps^2}, \frac{n^{2/3}}{\eps^{4/3}}\right) \right)$
then $\mathbb{P}[| Z_1 - \mathbb{E}[Z_1] | \ge t] \le 0.01$. 
\end{restatable}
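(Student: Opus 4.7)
The plan is a direct application of Chebyshev's inequality to $Z_1$, with the needed independence coming from Poissonization. Since the fresh counts $X(i) := s \hat p'(i)$ are independent $\text{Pois}(s\,p(i))$ across $i$, the summands $(X(i) - s v(i))^2$ for $i \in L$ are independent, so $\text{Var}(Z_1) = \sum_{i\in L} \text{Var}\bigl((X(i) - s v(i))^2\bigr)$. The entire task therefore reduces to upper bounding this variance by $0.01\,t^2$.

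For each $i \in L$, let $\lambda_i = s\,p(i)$ and $\delta_i = p(i) - v(i)$. Using the standard central moments of a Poisson (in particular $\mathbb{E}[(X(i)-\lambda_i)^3] = \lambda_i$ and $\mathbb{E}[(X(i)-\lambda_i)^4] = \lambda_i + 3\lambda_i^2$), a shift-and-expand calculation yields
\[\text{Var}\bigl((X(i)-s v(i))^2\bigr) \;=\; \lambda_i (1 + 2 s \delta_i)^2 + 2\lambda_i^2.\]
Expanding via $(1+2s\delta_i)^2 \le 1 + 4 s |\delta_i| + 4 s^2 \delta_i^2$, summing over $i \in L$, and applying Cauchy--Schwarz twice --- once as $\sum_i p(i)|\delta_i| \le \|p_L\|_2 \|p_L-v_L\|_2$ and once as $\sum_i p(i) \delta_i^2 \le \|p_L\|_2 \|p_L-v_L\|_4^2 \le \|p_L\|_2 \|p_L-v_L\|_2^2$ (using the monotonicity $\|\cdot\|_4 \le \|\cdot\|_2$) --- produces
\[\text{Var}(Z_1) \;\le\; s\,T \;+\; 4 s^2 \|p_L\|_2 \|p_L-v_L\|_2 \;+\; 4 s^3 \|p_L\|_2 \|p_L-v_L\|_2^2 \;+\; 2 s^2 \|p_L\|_2^2.\]

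To finish, I substitute $\|p_L - v_L\|_2 \le \eps/(2\sqrt n)$ and $T \le 1$, and compare each summand to $t^2 \ge s^4 \eps^4/(16\,n^2)$. The $s\,T$ term demands $s \gtrsim n^{2/3}/\eps^{4/3}$ — exactly one of the stated hypotheses; the $s^3\|p_L\|_2\eps^2/n$ and $2 s^2 \|p_L\|_2^2$ terms both reduce to $\|p_L\|_2 \lesssim s\,\eps^2/n$, which is the other hypothesis; and the mixed $s^2\|p_L\|_2 \eps/\sqrt n$ term follows by combining them (using the trivial bound $\sqrt n \ge \eps$ for $\eps\in(0,1)$). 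Chebyshev's inequality then delivers the conclusion, with the constant $0.01$ absorbed into the hidden constants in the $\Omega(\cdot)$-hypothesis on $s$.

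The main subtlety I anticipate is handling the cubic-in-$s$ contribution $4 s^3 \sum_{i \in L} p(i)\delta_i^2$. The naive bound $\sum p(i)\delta_i^2 \le \|p_L\|_\infty \|p_L-v_L\|_2^2$ would require a pointwise control over the light-coordinate probabilities, coupling the analysis to the heavy/light threshold $\gamma$ and muddying the sample-complexity hypothesis. The $\ell_4$-vs-$\ell_2$ inequality replaces $\|p_L\|_\infty$ by $\|p_L\|_2$, which dovetails exactly with the $s \gtrsim n\|p_L\|_2/\eps^2$ hypothesis and keeps the statement clean and $\gamma$-free.
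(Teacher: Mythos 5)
Your proof is correct and follows essentially the same route as the paper's: Chebyshev's inequality applied to $Z_1$, with the variance computed coordinate-wise from Poisson moments (your exact expression $\lambda_i(1+2s\delta_i)^2+2\lambda_i^2$ agrees with the paper's $4y(y-z)^2+6y^2+y-4yz$) and then controlled via the same Cauchy--Schwarz step $\sum_i p_L(i)\delta_i^2 \le \|p_L\|_2\|p_L-v_L\|_2^2$. The only cosmetic difference is that you carry the linear cross term $4s^2\|p_L\|_2\|p_L-v_L\|_2$ separately (checking it via the product of the two hypotheses on $s$), whereas the paper absorbs it into the $s^2\|p_L\|_2^2$ term before applying Chebyshev; both are valid.
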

Next we consider the case where $\eps/\sqrt{n} \lesssim \|p_L-v_L\|_2$. In this case, we need to obtain a stronger concentration result since later we union bound over possibly $\Omega(k)$ different distributions which are in group $S_{\infty}$.

\begin{restatable}{lemma}{lightconcentrationtwo}\label{lem:light_concentration_2}
    Suppose $p$ satisfies $\|p_L-v_L\|_2  \ge  0.75 \cdot \frac{\eps}{\sqrt{n}}$. Set 
$t = \frac{s^2 \|p_L-v_L\|_2^2}{4} + \frac{s^2 \eps^2}{4n},$
suppose $\gamma = 1/n^{5/12}$,
and let $Z_1 = \|s \cdot \hat{p}'_L-s \cdot v_L \|_2^2$ denote our estimator.
If $s = \Omega\left(\ \frac{n^2\gamma^{3}\log(k)^2}{\eps^4} \right)$
then $\mathbb{P}[| Z_1 - \mathbb{E}[Z_1] | \ge t] \le 1/\textup{poly}(k)$.
\end{restatable}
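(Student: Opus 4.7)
The plan is to write $Z_1 - \mathbb{E}[Z_1] = \sum_{i \in L} Y_i$ as a sum of independent centered random variables and apply a sub-exponential Bernstein-type inequality. Setting $\mu_i := s p(i)$, $d_i := s(p(i)-v(i))$, and $W_i := X_i - \mu_i$ where $X_i := s\hat{p}'(i) \sim \mathrm{Pois}(\mu_i)$ are independent (by Poissonization, noting that $L$ is a function of the first half of the samples and hence independent of $\hat{p}'$), I would expand
\[
Y_i \;=\; (X_i - s v(i))^2 - \mathbb{E}[(X_i - s v(i))^2] \;=\; (W_i^2 - \mu_i) + 2 d_i W_i.
\]

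Using the Poisson moments $\mathbb{E}[W_i^2] = \mathbb{E}[W_i^3] = \mu_i$ and $\mathbb{E}[W_i^4] = \mu_i + 3\mu_i^2$, a direct expansion gives $\mathrm{Var}(Y_i) = \mu_i + 2\mu_i^2 + 4 d_i \mu_i + 4 d_i^2 \mu_i$. I would then exploit the light-element structure: Lemma \ref{lem:p_close_l_1_heavy} gives $\|p - v^{\infty}\|_\infty \lesssim (\log n)^2 \log\log n / \sqrt{n}$, and every $v \in S_\infty$ is $\ell_\infty$-close to $v^\infty$; since $\gamma = n^{-5/12}$ dominates $(\log n)^2 \log\log n / \sqrt{n}$ for large $n$, it follows that $p(i), v(i) \lesssim \gamma$ for every $i \in L$, so $\mu_i, |d_i| \lesssim s\gamma$. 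Combining this with $\sum_{i \in L} p(i) \le 1$, $\sum_{i \in L} p(i)^2 \le \gamma$, Cauchy--Schwarz on the sign-indefinite cross term $\sum d_i \mu_i$, and the identity $\sum_{i \in L} d_i^2 = s^2 \|p_L - v_L\|_2^2$, I would obtain
\[
\mathrm{Var}(Z_1) \;\lesssim\; s^2 \gamma \;+\; s^3 \gamma \,\|p_L - v_L\|_2^2.
\]

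Each $Y_i$ is a quadratic polynomial in the centered Poisson $W_i$ with parameter $\mu_i \lesssim s\gamma$, so standard moment bounds for Poisson variables yield a sub-exponential Orlicz-norm bound $\|Y_i\|_{\psi_1} \lesssim s\gamma + |d_i| \lesssim s\gamma$. Applying the standard Bernstein inequality for sums of independent sub-exponential random variables,
\[
\mathbb{P}\bigl[|Z_1 - \mathbb{E}[Z_1]| \ge t\bigr] \;\le\; 2\exp\!\Bigl(-c \min\bigl(\tfrac{t^2}{\mathrm{Var}(Z_1)},\, \tfrac{t}{M}\bigr)\Bigr),
\]
with $M := \max_i \|Y_i\|_{\psi_1}$. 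Plugging in $t \ge s^2 \eps^2 / (4n)$ (using the hypothesis $\|p_L - v_L\|_2 \ge 0.75 \eps / \sqrt{n}$) and checking both regimes of the minimum, the assumption $s \gtrsim n^2 \gamma^3 \log(k)^2 / \eps^4$ is calibrated exactly to make both arguments of the $\min$ at least $\Omega(\log k)$, yielding the $1/\mathrm{poly}(k)$ bound.

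The main obstacle is handling the cross term $4 d_i \mu_i$ in $\mathrm{Var}(Y_i)$, which is sign-indefinite and does not factor cleanly; it must be bounded via Cauchy--Schwarz against $\sum d_i^2 = s^2 \|p_L - v_L\|_2^2$ and combined carefully with the quadratic term $4 d_i^2 \mu_i$ so that the resulting $\|p_L - v_L\|_2^2$ dependence is matched by the corresponding term in $t^2$ rather than swamping it. A secondary subtlety is pinning down the sub-exponential norm of $W_i^2 - \mu_i$ for Poissons of small intensity $\mu_i \lesssim s\gamma$, which is precisely what forces the cubic $\gamma^3$ dependence in the sample complexity and distinguishes this lemma from Lemma \ref{lem:light_concentration_1}.
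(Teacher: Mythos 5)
Your decomposition $Y_i = (W_i^2 - \mu_i) + 2d_iW_i$, the per-coordinate variance $\mu_i + 2\mu_i^2 + 4d_i\mu_i + 4d_i^2\mu_i$ (which matches the paper's Lemma~\ref{lem:variance_one} exactly after the change of variables), the use of $p(i), v(i) \lesssim \gamma$ on $L$, and the overall Bernstein strategy all track the paper's proof. The gap is in the claim $\|Y_i\|_{\psi_1} \lesssim s\gamma$. A centered Poisson $W_i$ with mean $\mu_i$ is sub-exponential but \emph{not} sub-Gaussian (its upper tail is roughly $\exp(-u\log(u/\mu_i))$ for $u \gg \mu_i$), so $W_i^2$ has tail $\Pr(W_i^2 > v) \approx \exp(-c\sqrt{v}\log(\sqrt{v}/\mu_i))$ for large $v$, which is not dominated by $\exp(-v/K)$ for any finite $K$. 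Hence $W_i^2 - \mu_i$ is sub-Weibull of order $1/2$, not sub-exponential, and the standard $\psi_1$-Bernstein inequality does not apply with $M \lesssim s\gamma$. Your own closing remark betrays the inconsistency: with $M \lesssim s\gamma$ and $t \gtrsim s^2\eps^2/n$, the condition $t/M \gtrsim \log k$ would only require $s \gtrsim n\gamma\log(k)/\eps^2$, which does not reproduce the $\gamma^3$ in the lemma's hypothesis.

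The paper's way around this is a truncation: condition on the event $\mathcal{E}$ (holding with probability $1 - 1/\mathrm{poly}(k)$) that every $|X(i) - sp(i)| \le O(\sqrt{s p(i)\log k} + \log k)$, under which each summand satisfies the almost-sure bound $|Y_i| \le M = O\bigl(\log k \cdot (s\gamma)^{1.5}\bigr)$, and then apply bounded-Bernstein conditionally (absorbing the $1/\mathrm{poly}(k)$ shifts in $\E[Z_1]$ and $\mathrm{Var}(Z_1)$ caused by the conditioning). It is exactly this larger $M$ --- carrying the extra factor $(s\gamma)^{1/2}\log k$ relative to your $s\gamma$ --- that forces $s^{1/2} \gtrsim n\gamma^{3/2}\log(k)/\eps^2$, i.e.\ $s \gtrsim n^2\gamma^3\log(k)^2/\eps^4$. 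You could alternatively invoke a Bernstein inequality for sums of sub-Weibull$(1/2)$ variables, but the exponent then contains $(t/M')^{1/2}$ rather than $t/M'$, and you would need to redo the calibration of $s$; either way the argument as written needs this step repaired. Your variance bound and the treatment of the cross term are otherwise fine (note the paper handles the sign-indefinite term more simply by observing $-4yz \le 0$).
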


\paragraph{Putting it all together: Proof of Theorem \ref{thm:main}}

We give a high level outline of the proof by showing how to combine the prior results to prove the main theorem and defer the full proof to Appendix \ref{sec:main_thm_proof}. See Section \ref{sec:main_algo} for additional inituition.

Lemma \ref{lem:p_close_l_1_heavy} states that we can essentially ignore the heavy elements in the group $S_{\infty}$ and furthermore, Lemma \ref{lem:opt_in_group} proves that $v^* \in S_{\infty}$. Thus we can restrict our attention to the light elements of the group belonging to $v^{\infty}$. Lemma \ref{lem:light_EV} states that the expected value of the (scaled) $\ell_2$ distance squared from $\hat{p}'_L$, the empirical distribution, to $v^*_L$ is at most $s + s^2 \|p_L - v^*_L\|_2^2 \le s + s^2\eps^2/(4n)$ (assuming $T$ is a constant for the sake of simplicity of the discussion). On the other hand, the expected value of the same statistic for any distribution $v$ in $S^{\infty}$ which is $0.99 \eps$ far from $p$ in $\ell_1$ has to be at least $s + s^2\|p_L - v_L\|_2^2 \ge s + (0.99)^2s^2 \eps^2 /n$. This is bigger than the corresponding value for $v^*$ by a factor of roughly $1+O(\eps^2 s^2/n)$. These calculations are only true in expected value so we use Lemmas \ref{lem:light_concentration_1} and \ref{lem:light_concentration_2} to prove that the quantities are close to their expected value with high probability. Since the values of the $\ell_2$ statistic are sufficiently far apart in these two cases, the $\ell_2$ NNS data structure outputs a close distribution $\tilde{v}$ satisfying the theorem guarantees.

\section{Improving the runtime for classic hypothesis testing}
\label{sec:fast-tournament}
We now switch our focus to algorithms for the classic hypothesis selection problem in the \emph{improper} setting. Recall from Section \ref{sec:prelim} that we are given a set of $k$ distributions $\mathcal{V}=\{v_1,\dots,v_k\}$ over $[n]$ and a set of samples $\mathcal{S}$ from some distribution $p$ also over $[n]$. Given the samples, our goal is to output $\hat v\in \mathcal{V}$ such that 
$
\|p-\hat v\|_1 \le C \cdot \min_j \|p- v_j\|_1 + \eps $
with probability at least $1-\delta$ for some constant $C$ and some small 
parameters $\eps$ and $\delta$. 
In the case $k=2$, this problem can be solved using the so called Scheffe test~\cite{Scheffe1947useful} which we will discuss shortly. This test uses $O(\frac{\log 1/\delta}{\eps^2})$ samples and returns a $\hat v\in \mathcal{V}$ satisfying the above bound with $C=3$.
In~\cite{devroye2001combinatorial} this was extended to the case $k>2$. They showed that with $s=O((\log k+\log \frac{1}{\delta})/\eps^2)$ samples, running the Scheffe test for each pair of distributions and outputting the one with the most wins, yields an estimate that satisfies the bound above with $C=9$. This is referred to as the Scheffe tournament. The running time for this tournament, clearly scales with $O(k^2)$. However, using a knockout tournament style algorithm,~\cite{suresh2014spherical} showed that the running time can be reduced to $O\left(\frac{k}{\eps^2}(\log k+\log \frac{1}{\delta})\right)$ for a fairly large constant $C$ even when we are only given sample access to the distributions in $\mathcal{V}$. In~\cite{acharya2018maximum} the authors showed that a simple algorithm \texttt{Quick-Select} obtains $C=9$ with the same running time and sample size. Here we show how to reduce this runtime to $O\left(\frac{k}{\eps^2}\log \frac{1}{\delta}\right)$ with a slight blow-up in the value of $C$. For instance, when $\delta=\Omega(1)$ this shaves off a factor of $\log k$ of the running time. To obtain this bound, we will allow our algorithm a simple preprocessing step, namely for each pair of distributions $(v_i,v_j)$, we compute and store the total variation distance between them. This can be done in total time $O(k^2 n)$. 

To achieve this improvement, we recall the classic Scheffe test~\cite{Scheffe1947useful}. This test can be seen as an algorithm for the hypothesis selction problem for the case $k=2$. The algorithm takes as input two distributions $v_1$ and $v_2$ over $[n]$ and samples from an unknown distribution $p$ over $[n]$. Let's define $S=\{i\in [n]:v_1(i)> v_2(i)\}$. Upon receiving the samples, the algorithm computes $\mu_S$ the frequency of samples in $S$. It then outputs $v_1$ if $|v_1(S)-\mu_S|\leq |v_2(S)-\mu_S|$. Otherwise, it outputs $v_2$. Note that the values $v_1(S)$ and $v_2(S)$ can be computed directly from the total variation distance between $v_1$ and $v_2$. It was shown in~\cite{devroye2001combinatorial} that for $k=2$ and with $s$ samples, with probability $1-\delta$ the Scheffe test outputs a distribution $v$ satisfying that
\begin{align}\label{eq:scheffe-estimate}
\|p-v\|_1 \le 3 \cdot \min_{j\in \{1,2\}} \|p- v_j\|_1 + \sqrt{\frac{10\log \frac{1}{\delta}}{s}}.
\end{align}
Except for a simple modification, our algorithm is similar to the algorithm proposed in~\cite{suresh2014spherical}. Assume for simplicity that $k$ is a power of two --- this assumption can easily be dispensed with. Define $\delta_i=\delta/4^i$ and $s_i=\frac{10\log (1/\delta_i)}{\eps^2}$. Finally, define $\mathcal{V}_1=\mathcal{V}$. Our algorithm first initializes a set $\mathcal{C}\leftarrow\emptyset$. Then for $i=1,\dots, \lg k$, it performs the following steps:
\begin{enumerate}
  \setlength\itemsep{0em}
    \item Randomly select a subset of $\min\{k^{1/3},|\mathcal{V}_i|\}$ elements from $|\mathcal{V}_i|$ and move them to $\mathcal{C}$.
    \item Randomly form $|\mathcal{V}_i|/2$ pairs of distributions in $\mathcal{V}_i$ and run the Scheffe test on each pair using the set $\mathcal{S}_i$ consisting of the first $s_i$ elements of the sample. 
    \item Define $\mathcal{V}_{i+1}$ to be the set of $|\mathcal{V}_i|/2$ winners.
\end{enumerate}
Finally, our algorithm runs the Scheffe test on all pairs of distributions in $\mathcal{C}$ using a single set of $\frac{10\log \left(\binom{|\mathcal{C}|}{2}/\delta\right)}{\eps^2}$ new samples from $p$ (independent of the samples used in step 2. above). It outputs the distribution $\hat v$ in $\mathcal{C}$ with the most wins among these comparisons (breaking ties arbitrarily). 

The difference between this algorithm and the algorithm in~\cite{suresh2014spherical} is that we do not use the entire sample in the lower levels of the tournament tree. Intuitively, for a subtree of the tournament tree containing $2^i$ distributions, we only need to consider a sample of size $s_i$ in order to union bound over the bad events that the distribution in $\mathcal{V}$ closest to $p$ loses to either of the `far' distributions in this subtree. As running a single Scheffe test with a sample of size $s$ takes $O(s)$ time, the running time used for the knockout tournament is therefore $O\left(\sum_{i=1}^{\lg k} s_ik/2^i\right)=O\left(\frac{k}{\eps^2}\log \frac{1}{\delta}\right)$. We defer the full proof to Appendix \ref{sec:fast_tournament_proof}.

\begin{restatable}{theorem}{fasttournament}\label{thm:fast_tournament}
    Assume that $\delta\geq k^{-1/4}$. With probability $1-O(\delta)$, the algorithm of Section \ref{sec:fast-tournament} outputs a distribution $\hat v$ with
$
\|p-\hat v\|_1 \le 27 \cdot \min_j \|p- v_j\|_1 + O(\eps). 
$
The algorithm uses $s=O((\log k+\log \frac{1}{\delta})/\eps^2)$ samples from $p$ and has running time $O\left(\frac{k}{\eps^2}\log \frac{1}{\delta}\right)$.
\end{restatable}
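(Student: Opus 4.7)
The plan is to prove the theorem by separately analyzing sample complexity, running time, and correctness. For the \textbf{sample complexity}, in Phase 1 the sets $\mathcal{S}_i$ are nested initial segments of one sample stream, so the total Phase-1 samples equal $s_{\lg k}=O((\log k+\log(1/\delta))/\eps^2)$; the Phase-2 all-pairs round uses fresh samples of size $O(\log(\binom{|\mathcal{C}|}{2}/\delta)/\eps^2)$, which is of the same order since $|\mathcal{C}|=O(k^{1/3}\log k)$ and $\delta\ge k^{-1/4}$ keeps $\log(k/\delta)=O(\log k)$. For the \textbf{running time}, since preprocessing stores all pairwise total variation distances, each Scheffe test on a sample of size $s$ runs in time $O(s)$. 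Phase 1 then costs $\sum_{i=1}^{\lg k}(|\mathcal{V}_i|/2)\,s_i=O(k/\eps^2)\sum_i (i+\log(1/\delta))/2^i=O(k\log(1/\delta)/\eps^2)$, and Phase 2 costs $\binom{|\mathcal{C}|}{2}\cdot O((\log k)/\eps^2)=O(k^{2/3}(\log k)^3/\eps^2)=o(k/\eps^2)$, giving the claimed runtime.

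For \textbf{correctness} I decompose the target factor as $27=3\cdot 9$, where the $9$ comes from Phase 2 and the $3$ from Phase 1 depositing a $3$-approximation into $\mathcal{C}$. Write $v^*=\argmin_j\|p-v_j\|_1$, $\mathrm{opt}=\|p-v^*\|_1$, and $G=\{v_j:\|p-v_j\|_1\le 3\,\mathrm{opt}+\eps\}$. First, a union bound over the $\binom{|\mathcal{C}|}{2}$ Phase-2 Scheffe tests (on fresh samples independent of Phase 1) guarantees, with probability $\ge 1-\delta$, that every such test outputs a $3$-approximation to the better of its pair up to additive $O(\eps)$, as in \eqref{eq:scheffe-estimate}; the classical Devroye--Lugosi analysis then yields $\|p-\hat v\|_1\le 9\min_{v\in\mathcal{C}}\|p-v\|_1+O(\eps)$. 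It thus suffices to prove the \emph{key lemma}: with probability $\ge 1-O(\delta)$, $\mathcal{C}\cap G\neq\emptyset$, since this gives $\min_{v\in\mathcal{C}}\|p-v\|_1\le 3\,\mathrm{opt}+\eps$ and hence the $27\,\mathrm{opt}+O(\eps)$ bound by a final union bound.

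For the key lemma I track the representative $\tilde v_i$ of $v^*$'s bracket at level $i$, with $\tilde v_1=v^*$. Exactly one Scheffe test per level involves the current representative, and it fails with probability $\le\delta_i=\delta/4^i$, so $\sum_i\delta_i=O(\delta)$ bounds the probability that any such test is faulty. Conditioned on no faulty test, any partner with $\|p-\cdot\|_1>3\|p-\tilde v_i\|_1+\eps$ is beaten and forces $\tilde v_{i+1}=\tilde v_i$; the only way the representative changes is when it meets a ``close'' partner, in which case the new representative is itself a $3$-approximation of $\tilde v_i$. I dichotomize on $|G|$: if $|G|\ge k^{2/3}\log(1/\delta)$, the level-$1$ random subset of size $k^{1/3}$ already lands in $G$ with probability $\ge 1-(1-|G|/k)^{k^{1/3}}\ge 1-e^{-|G|/k^{2/3}}\ge 1-\delta$, so $\mathcal{C}\cap G\neq\emptyset$ directly. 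Otherwise the fraction $|G\cap\mathcal{V}_i|/|\mathcal{V}_i|$ stays small at every level, and, using $\delta\ge k^{-1/4}$ to absorb polylogarithmic factors, with probability $\ge 1-O(\delta)$ the representative never meets a $G$-partner, so $\tilde v_i=v^*$ until $|\mathcal{V}_i|\le k^{1/3}$ (which happens by level $i\approx (2/3)\log_2 k$), at which point every surviving distribution, including $v^*$, is automatically moved into $\mathcal{C}$.

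The \textbf{main obstacle} I anticipate is making the key-lemma argument watertight in the intermediate regime where $|G|$ is neither large enough for direct level-$1$ capture nor small enough to let $v^*$ coast uninterrupted. My plan is to perform the dichotomy level by level: at each level $i$, either $|G\cap\mathcal{V}_i|/|\mathcal{V}_i|\ge \log(\log k/\delta)/k^{1/3}$, in which case the $k^{1/3}$-random subset of that level captures a $G$-element with probability $\ge 1-\delta/\log k$, or the fraction is below that threshold, in which case the representative encounters a $G$-partner at that level with probability $\le O(\delta/\log k)$. Summing the per-level failure probabilities across the $O(\log k)$ levels yields total failure probability $O(\delta)$, and the hypothesis $\delta\ge k^{-1/4}$ is precisely what guarantees $\delta/\log k\ge \log(\log k/\delta)/k^{1/3}$, so that these two regimes jointly cover every level and close the argument.
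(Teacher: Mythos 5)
Your proposal is correct and follows essentially the same route as the paper's proof: the same sample/runtime accounting, the same $27 = 3\cdot 9$ decomposition with the key lemma that $\mathcal{C}$ contains a $3$-approximation with probability $1-O(\delta)$, the same dichotomy between ``the fraction of good distributions at some level is large, so the random $k^{1/3}$-subset captures one'' and ``the fraction is always small, so $v^*$ avoids good partners, beats all far partners, and survives into $\mathcal{C}$,'' and the same use of $\delta\geq k^{-1/4}$ to absorb the $\frac{\operatorname{polylog} k}{k^{1/3}}$ terms. One step deserves tightening: the claim that ``exactly one Scheffe test per level involves the current representative, and it fails with probability $\le\delta_i$'' ignores that the representative's opponent at level $i$ is determined adaptively by the outcomes of earlier levels, which were decided using $\mathcal{S}_{i-1}\subseteq\mathcal{S}_i$; the test being analyzed is therefore not a fixed pair independent of the sample. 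The paper resolves this by union bounding over all $2^i-1$ distributions that could possibly meet $v^*$ in the first $i$ rounds, and the choice $\delta_i=\delta/4^i$ is calibrated exactly so that $2^i\delta_i=\delta/2^i$ still sums to $\delta$; with that one-line fix your argument coincides with the paper's.
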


\section{Experiments}
We experimentally evaluate the faster tournament algorithm given in \cref{sec:fast-tournament} and compare its performance to the base knockout tournament~\cite{suresh2014spherical} on synthetic and on real networking data.\footnote{Code available at \url{https://github.com/justc2/datastructdensityest}} These experiments display the practical benefits of the faster algorithm.

The algorithms have several key parameters which we vary throughout the experiments.
$\allpairs$ is the number of distributions in each level of the tournament which are randomly sampled to compete in an all-pairs tournament at the end of the algorithm. This parameter is used in both the base tournament and our fast tournament.
$\fast$ controls the number of samples used in each level of the fast tournament. In particular, at the $i$th level, the fast tournament uses $\fast \cdot i$ samples for each Scheffe test.
While adjusting these parameters slightly does not change the constant factors in the analysis in \cref{sec:fast-tournament}, we find they make a large impact empirically and thus test the algorithms under various parameter settings.

We measure computation cost by the number of Scheffe operations performed by each tournament. One Scheffe operation is one comparison of a pair of distributions at a given sampled element (the basic computation performed during a Scheffe test).
Below, we describe the experimental setup for the datasets.

\begin{figure}[ht]
 \centering
 \includegraphics[width=0.75\columnwidth]{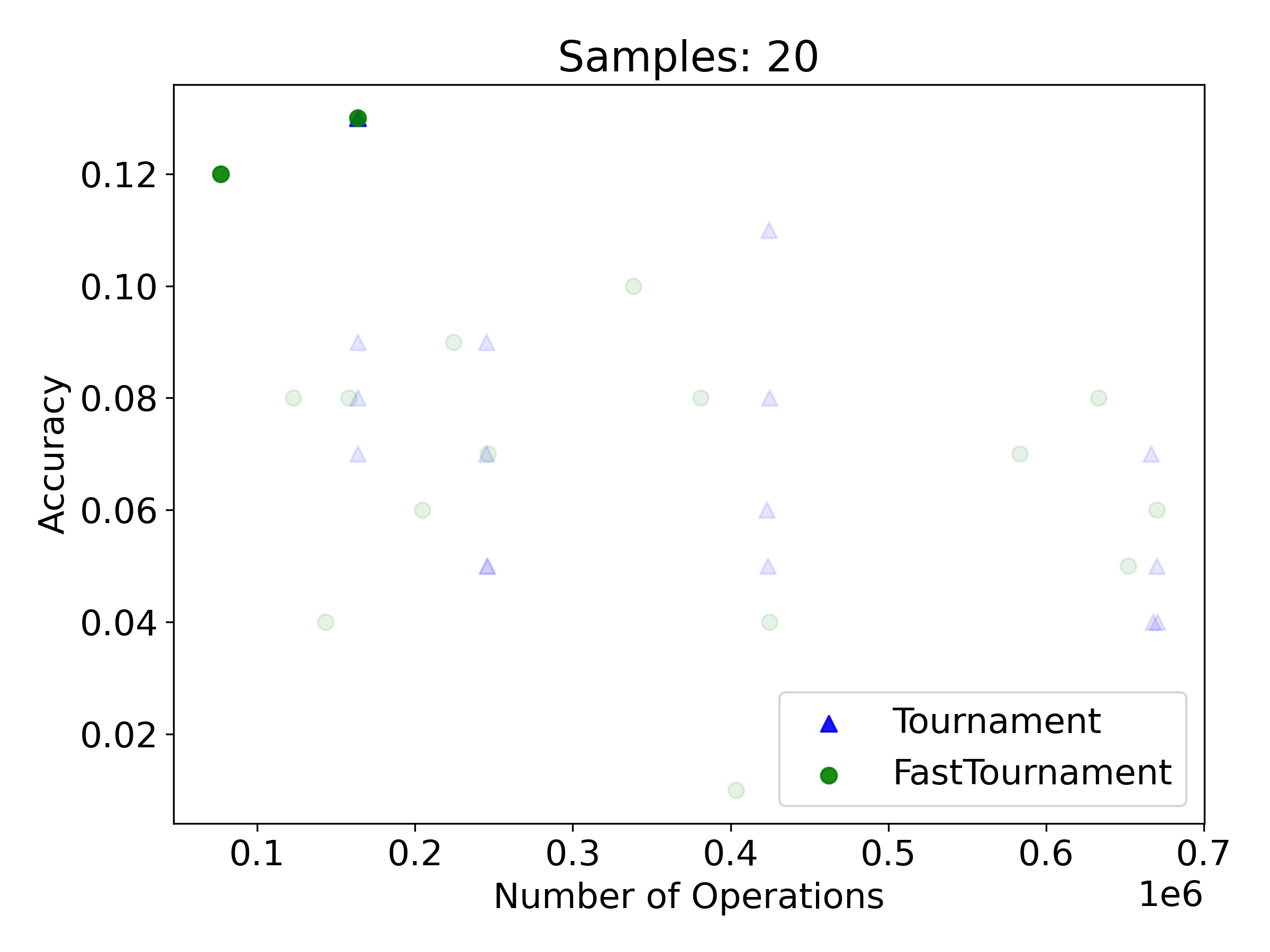}
 \caption{Grid search on half-uniform data with $20$ samples.}
 \label{fig:gs-20}
\end{figure}

\begin{figure}[ht]
 \centering
 \includegraphics[width=0.75\columnwidth]{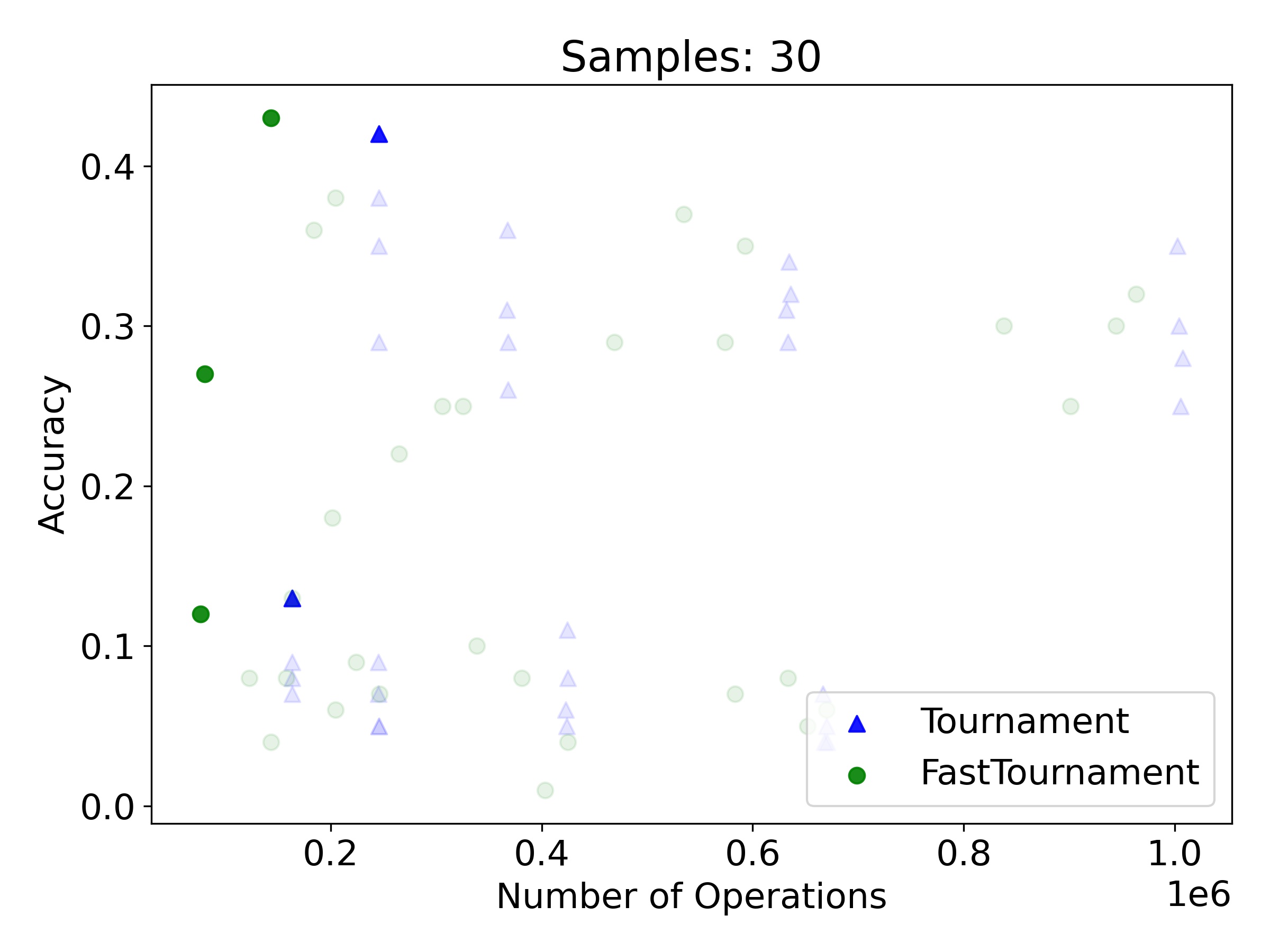}
 \caption{Grid search on half-uniform data with up to $30$ samples.}
 \label{fig:gs-30}
\end{figure}

\begin{figure}[ht]
 \centering
 \includegraphics[width=0.75\columnwidth]{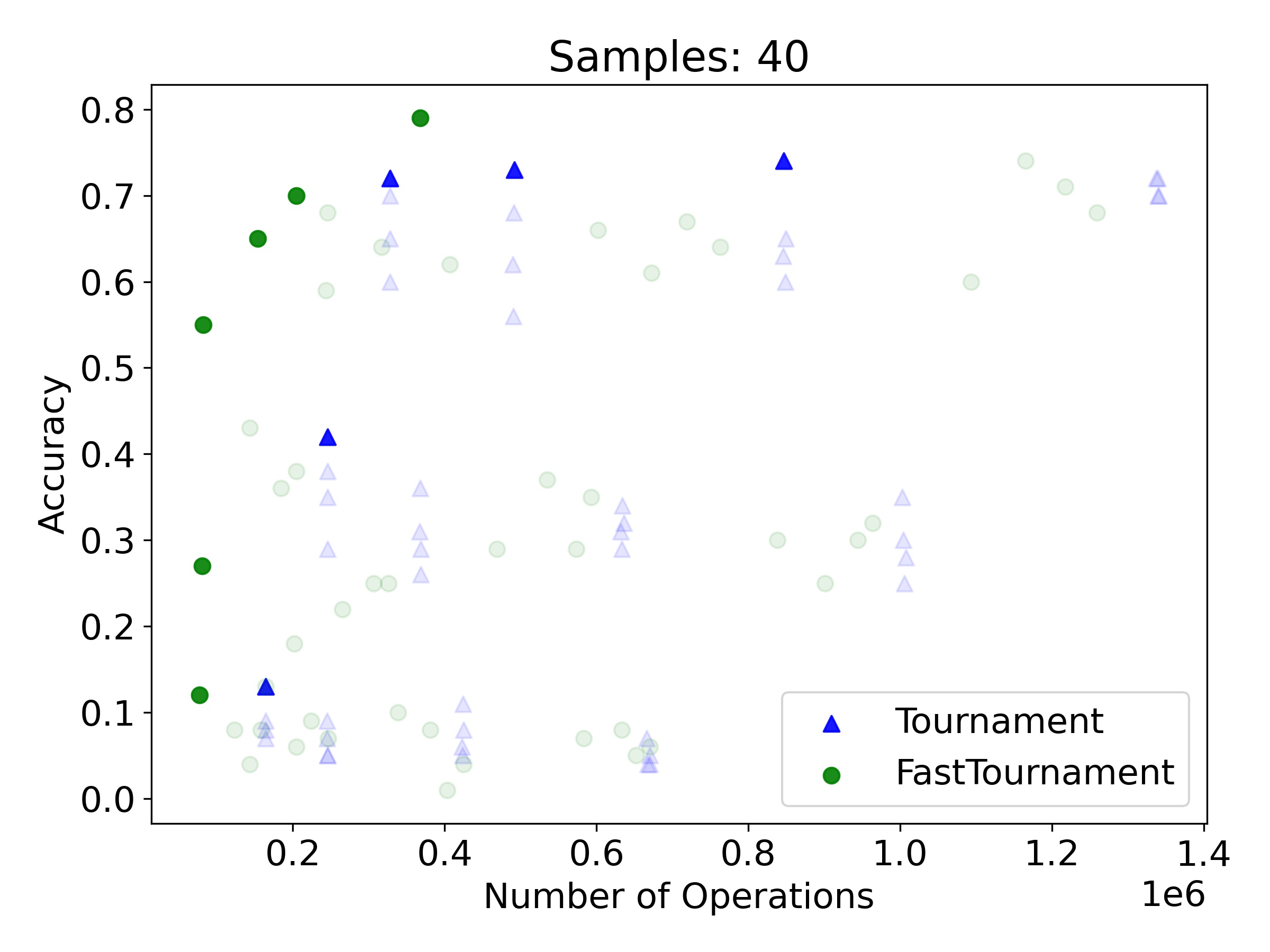}
 \caption{Grid search on half-uniform data with up to $40$ samples.}
 \label{fig:gs-40}
\end{figure}

\begin{figure}[ht]
 \centering
 \includegraphics[width=0.75\columnwidth]{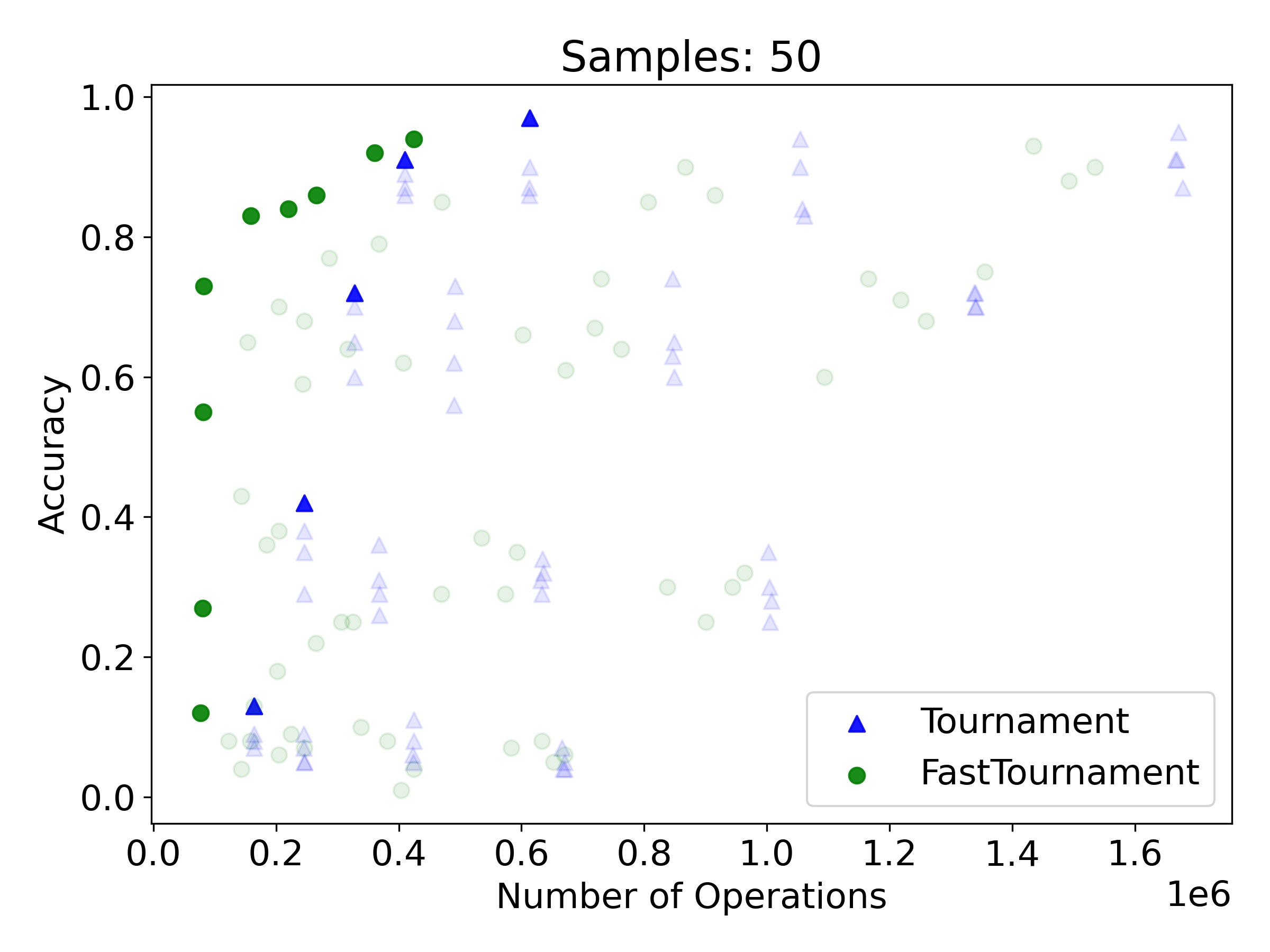}
 \caption{Grid search on half-uniform data with up to $50$ samples.}
 \label{fig:gs-50}
\end{figure}

\begin{figure}[ht]
 \centering
 \includegraphics[width=0.75\columnwidth]{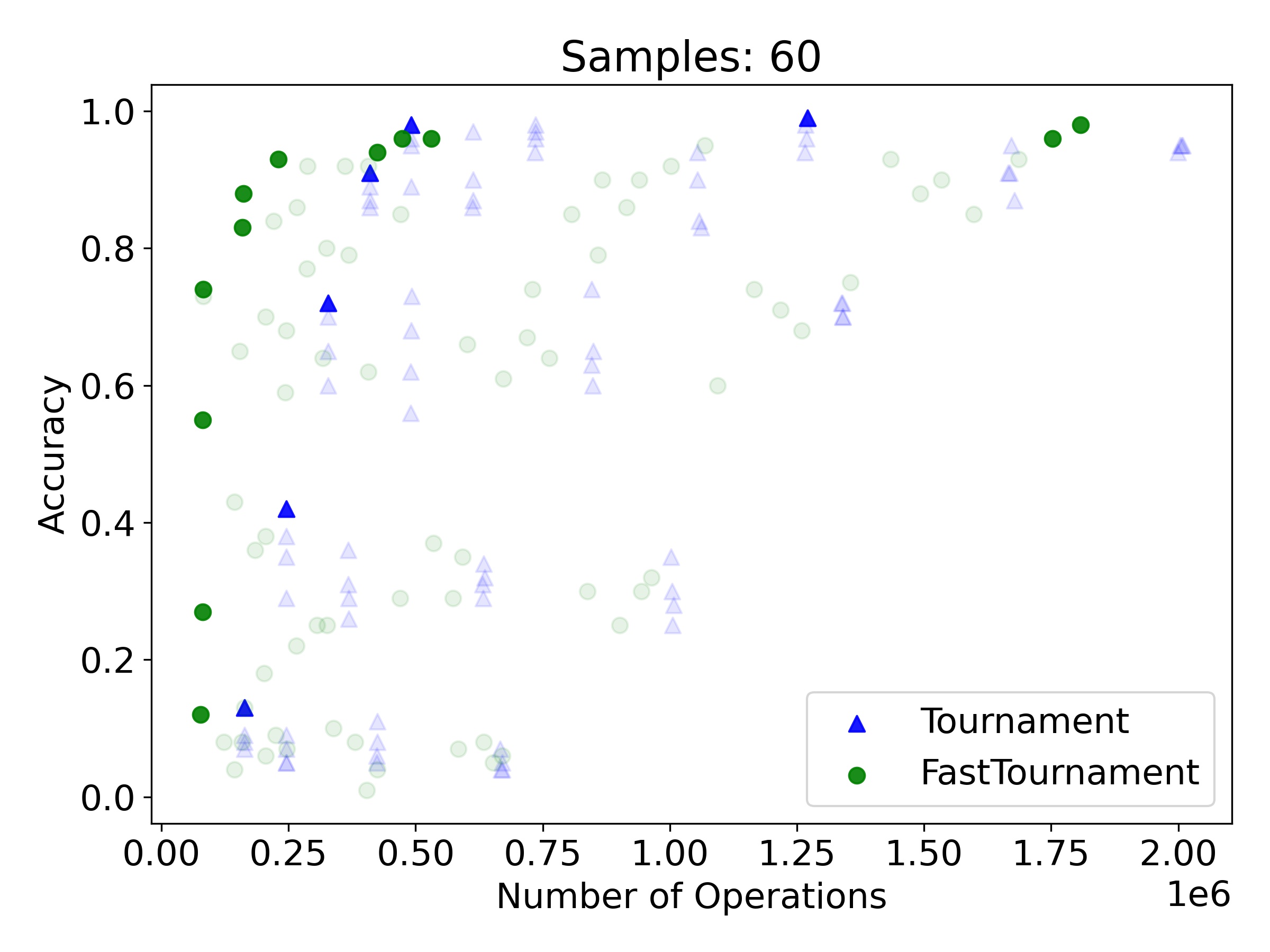}
 \caption{Grid search on half-uniform data with up to $60$ samples.}
 \label{fig:gs-60}
\end{figure}

\begin{figure}[ht]
 \centering
 \includegraphics[width=0.75\columnwidth]{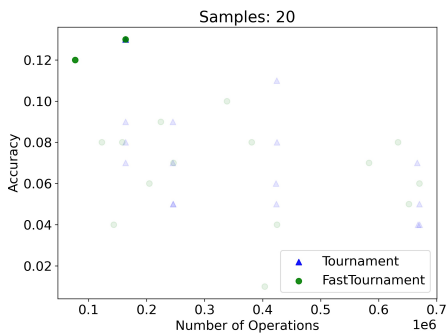}
 \caption{Grid search on Zipfian data with $20$ samples.}
 \label{fig:zipf-20}
\end{figure}

\begin{figure}[ht]
 \centering
 \includegraphics[width=0.75\columnwidth]{plots/accops_30.jpg}
 \caption{Grid search on Zipfian data with up to $30$ samples.}
 \label{fig:zipf-30}
\end{figure}

\begin{figure}[ht]
 \centering
 \includegraphics[width=0.75\columnwidth]{plots/accops_40.jpg}
 \caption{Grid search on Zipfian data with up to $40$ samples.}
 \label{fig:zipf-40}
\end{figure}

\subsection{Synthetic Experiments}
\paragraph{Setup}
We compare two synthetic datasets corresponding to \emph{half-uniform} and \emph{Zipfian} distributions.
The half-uniform dataset consists of $k=8192$ distributions over a domain of size $n=500$. Each distribution is uniform over a random $n/2$ sized subset of the domain. We consider a number of samples $s \in \{20, 30, 40, 50, 60\}$.

The Zipfian dataset consists of $k=4096$ distributions over a domain of size $n=250$. Each distribution is a random permutation of the standard Zipfian distribution where the $i$th element has probability proportional to $1/i$. We consider a number of samples $s \in \{20, 30, 40\}$.

In both cases, queries are formed by taking samples from one of the distributions in the dataset and performance is measured by the accuracy of the algorithms (i.e., the fraction of queried points for which the algorithm returned the true distribution).
As runtime can always be decreased by not using all of the sampled elements, for results using $s$ samples, we also report all results using fewer than $s$ samples.
For these experiments, we grid search over $\fast \in \{5, 10, 15, 20\}$ and $\allpairs \in \{0, 10, 20, 30\}$. 
The reported results are averaged over 5 random sets of 20 queries each.
\vspace{-3mm}
\paragraph{Results}
We will focus on the results for the half-uniform dataset as the conclusions for the Zipfian dataset are similar (see specific comments at the end of this subsection).
In Figures~\ref{fig:gs-20}-\ref{fig:gs-60}, we compare the accuracy and computational cost of the base tournament and our fast tournament under the various parameters setting described above. For both algorithms, the upper envelope of points (i.e., the discrete approximation of the Pareto curve for the accuracy/time tradeoff)  are highlighted.
As the number of samples increases, the accuracy both algorithms dramatically increases from $13\%$ at $20$ samples up to $98\%$ at $60$ samples.

Across different levels of sampling, the fast tournament is able to attain similar accuracy to the base tournament while using significantly fewer samples. 
For example, at $60$ samples, the fast tournament is able to achieve accuracy of $88\%$ using fewer than $165{,}000$ operations while the slow tournament requires more than $400{,}000$ operations to achieve accuracy greater than $80\%$ (an improvement of more than $2.4\times$).
In general, we find that the most important factor influencing accuracy is the number of total samples. On the other hand, using fewer samples in earlier rounds of the tournament via small $\fast$ has a moderate to negligible impact on accuracy. Finally, the points of the right side of the plots with many operations correspond to larger values of $\allpairs$. Increasing $\allpairs$ (at least beyond a certain point) has small impact on accuracy while significantly increasing computation due to the quadratic dependence on the all-pairs comparison at the end of the tournament.

The overall results for the Zipfian datset in Figures~\ref{fig:zipf-20}-\ref{fig:zipf-40} are very similar though the Zipfian distributions are qualitatively different from the half-uniform distributions as they contain several elements with quite large probabilities.

\begin{figure}[!htbp]
     \centering
     \begin{subfigure}[b]{0.75\columnwidth}
         \centering
         \includegraphics[width=\textwidth]{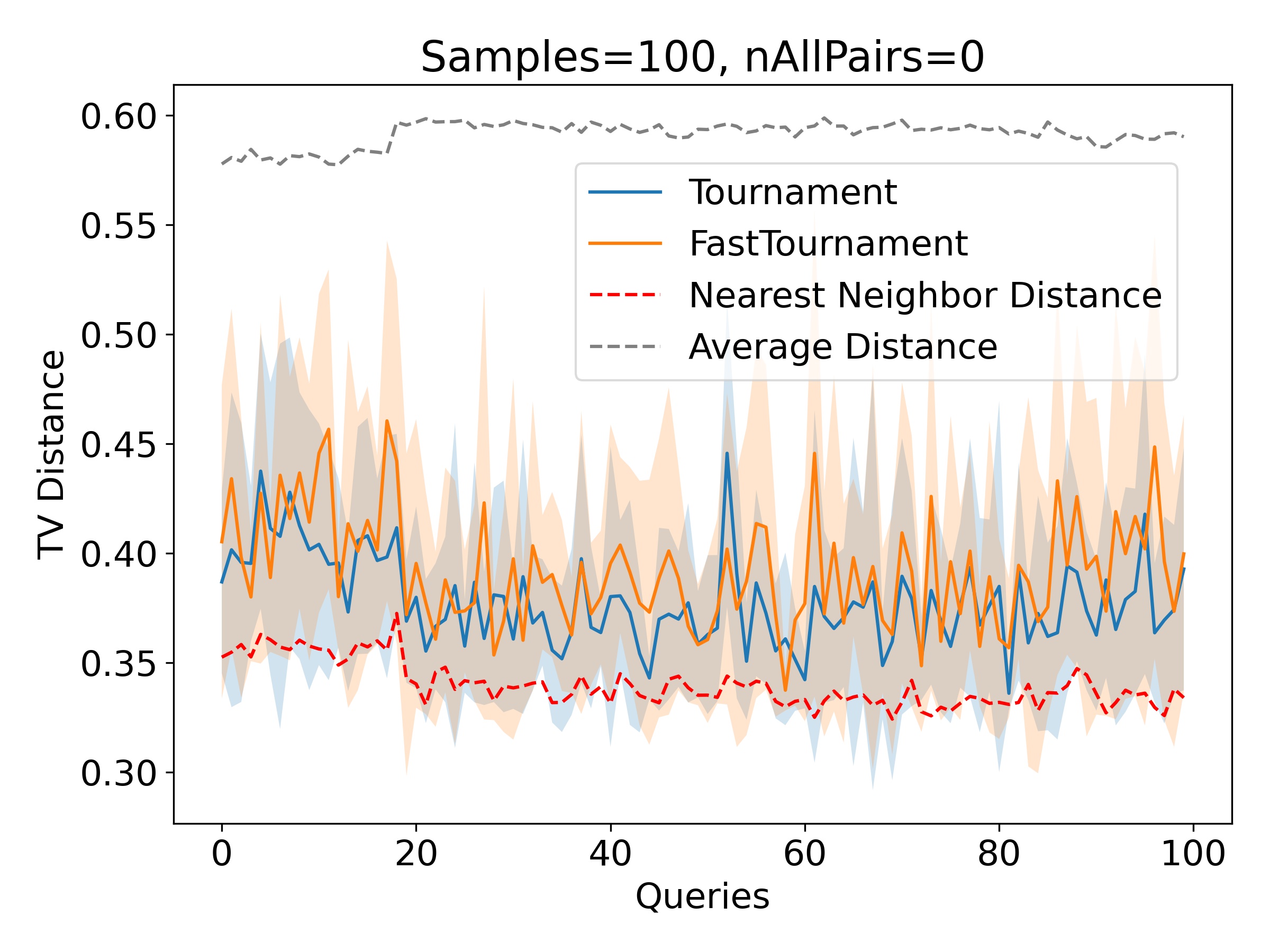}
     \end{subfigure}

    \centering
     \begin{subfigure}[b]{0.75\columnwidth}
         \centering
         \includegraphics[width=\textwidth]{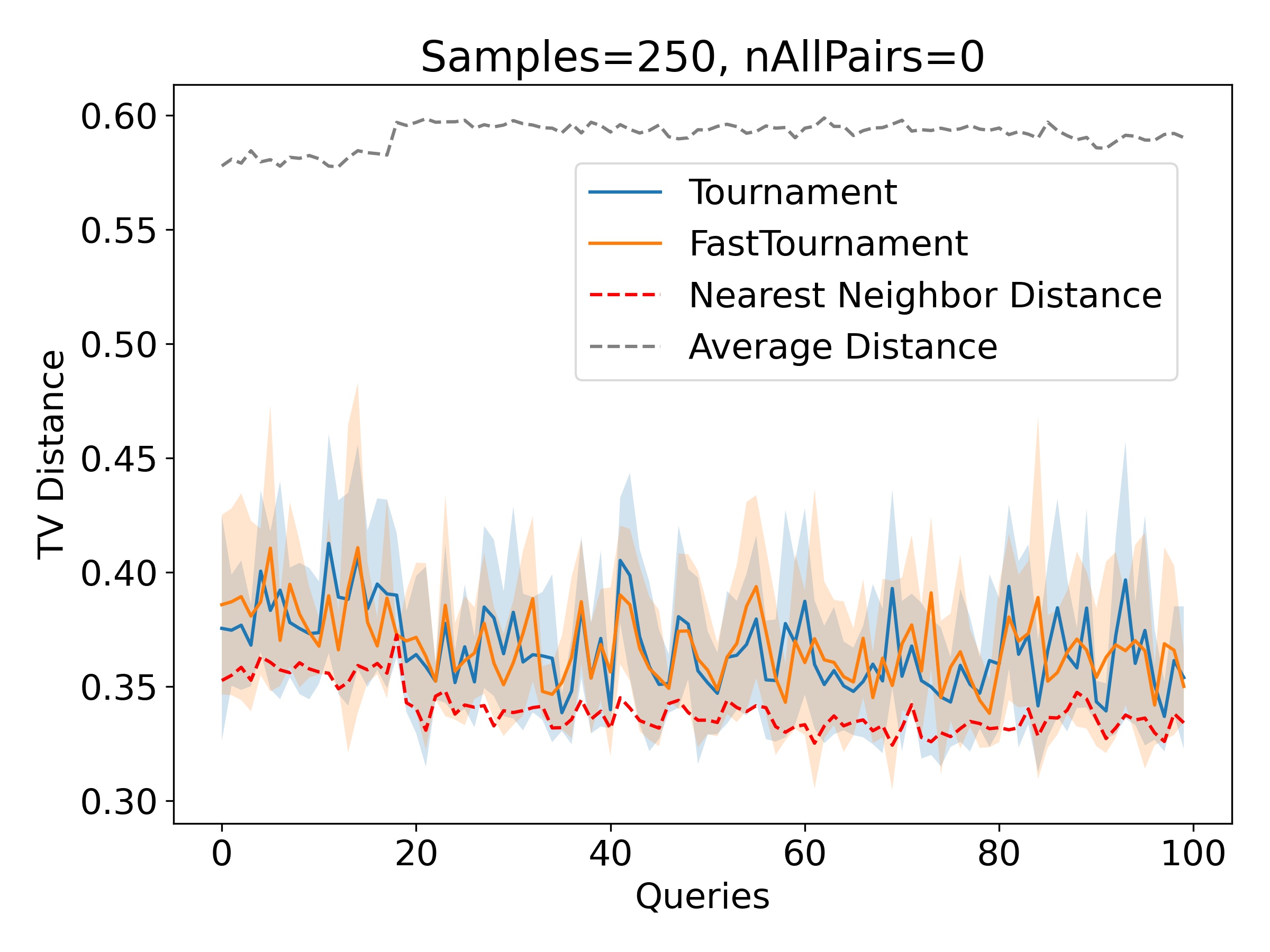}
     \end{subfigure}
\caption{Networking experiments with $\allpairs = 0$. The base tournament uses $5\times$ more operations than the fast tournament.}
\label{fig:ip0}
\end{figure}

\begin{figure}[t]
\centering
     \begin{subfigure}[b]{0.75\columnwidth}
         \centering
         \includegraphics[width=\textwidth]{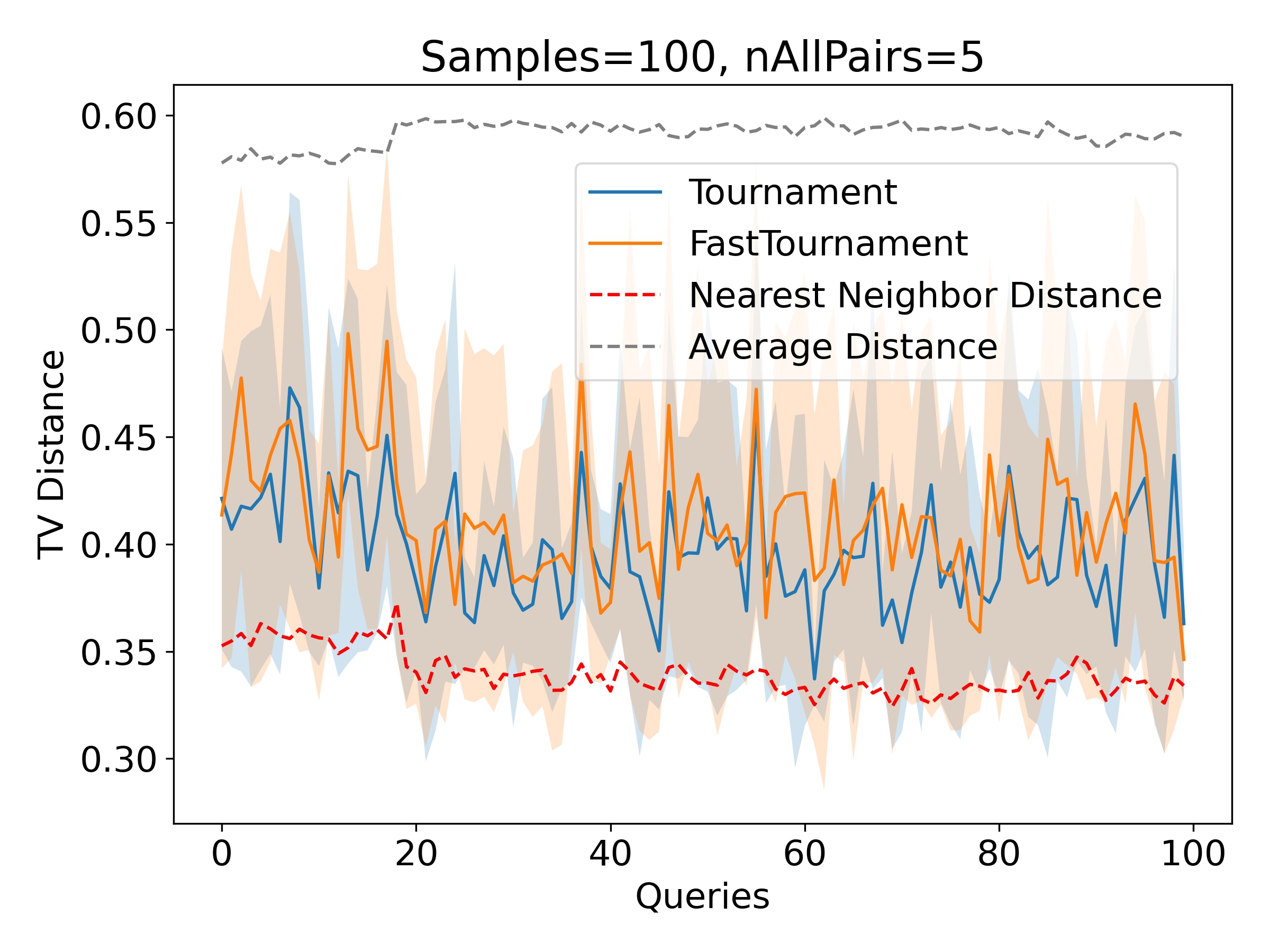}
     \end{subfigure}

     \begin{subfigure}[b]{0.75\columnwidth}
         \centering
         \includegraphics[width=\textwidth]{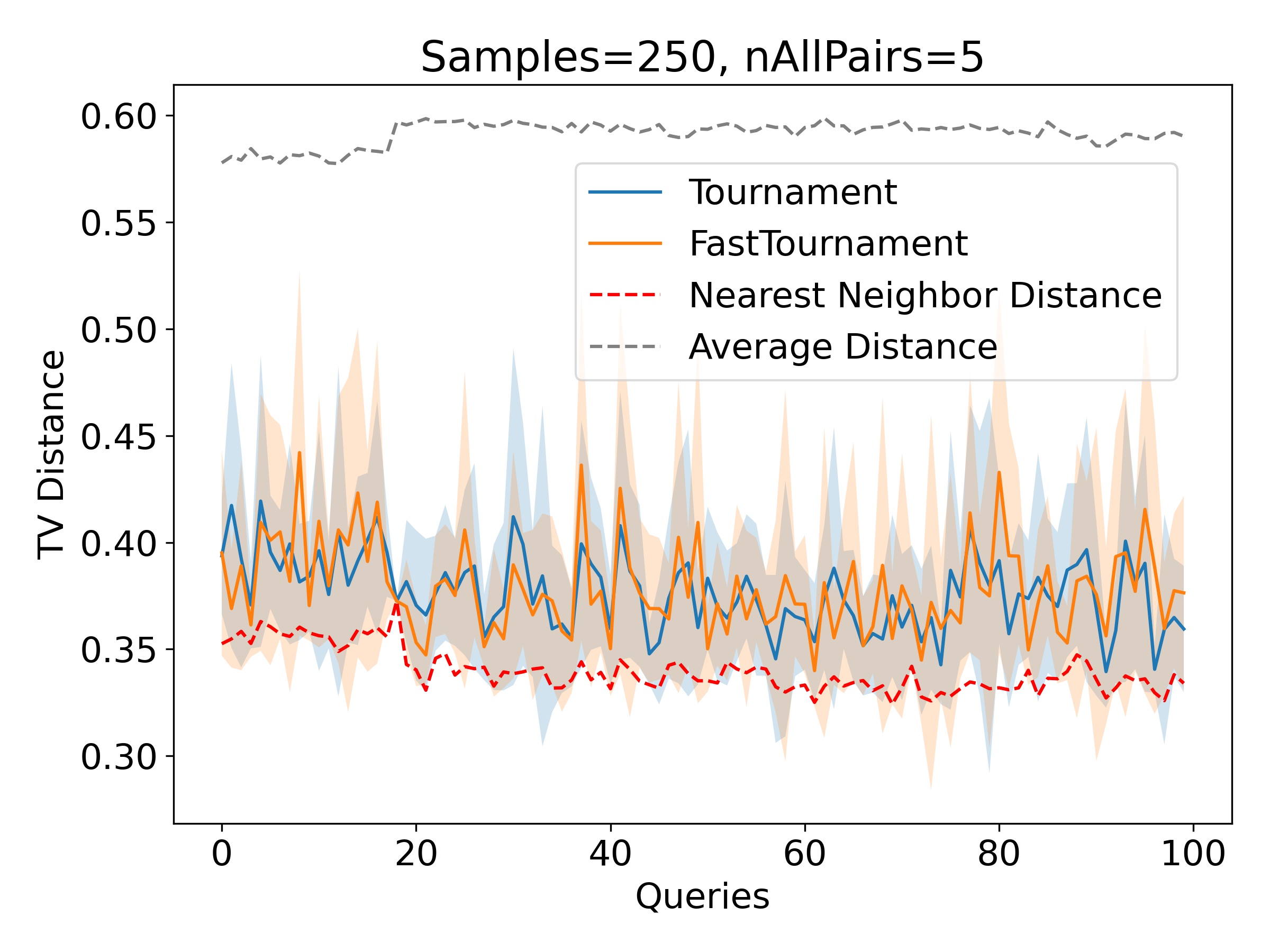}
     \end{subfigure}
\caption{Networking experiments with $\allpairs = 5$. The base tournament uses $2.4\times$ more operations than the fast tournament.}
\label{fig:ip5}
\end{figure}

\subsection{Networking Experiments}
\paragraph{Setup}
The underlying network data we use comes from the CAIDA Anonymized Internet Trace internet traffic dataset\footnote{From CAIDA internet traces 2019, \url{https://www.caida.org/catalog/datasets/monitors/passive-equinix-nyc/}}, which is IP traffic data collected at a backbone link of a Tier1 ISP in a data center in NYC. Within each minute, there are approximately $\approx 3.5 \cdot 10^7$ packets recorded.

We split $7$ minutes of the IP data into $2{,}148$ chunks, each representing $\approx 170$ms and approximately $10^{5}$ packets.
For each chunk, we construct a distribution corresponding to the empirical distribution over source IP addresses in that chunk.
The support sizes of the chunks (and therefore of the distributions) are approximately $45{,}000$.
The goal of approximate nearest neighbor search within this context is to identify similar traffic patterns to the current chunk of data within past data.
Further, algorithms for density estimation allow this computation to be done on subsampled traffic data, which is a common practice in networking to make data acquisition feasible~\cite{cisco}. 

For a series of 100 query chunks, we use the prior $2{,}048$ chunks as the set of distributions to search over.
We test the algorithms in two parameter regimes: $100$ samples and $\fast=10$ as well as $250$ samples and $\fast=25$. In both regimes, we test with $\allpairs=0$ and $\allpairs=5$. To measure performance, we report the total variation distance of the distribution returned by the tournament algorithms as well as the true nearest neighbor distance and average distance across all $2{,}048$ distributions.
Results are averaged over $10$ trials for each of the $100$ queries and one standard deviation is shaded.

\paragraph{Results}
In Figures \ref{fig:ip0} and \ref{fig:ip5}, we plot the performance of the base and fast tournaments on the networking data across samples $100$ and $250$ and for $\allpairs$ set to 0 and 5, respectively.
In all parameter settings, across the 100 queries, the fast tournament returns a distribution within roughly the same distance as the base tournament.
The fast tournament uses $5\times$ (when $\allpairs=0$) or $2.4\times$ (when $\allpairs=5$) less number of operations. Interestingly, neither algorithm performs better with the inclusion of an all-pairs comparison at the end of the tournament.
Even with the relatively small sampling rate compared to a domain of all possible IPs and support size of $45$k, the tournament algorithms are able to recover distributions close to the nearest neighbor distance. Comparing the results with $100$ and $250$ samples, the distance to the distribution returned by the algorithms moderately improves and the variance reduces considerably with more samples.

\paragraph{Experiment summary}
In both the synthetic and real-world experiments, the tournament algorithms perform well in recovering close distributions using few samples. The most important factor in the algorithms' performance is the number of total samples. On the other hand, the fast tournament is able to use very limited sample sizes for earlier rounds of the tournament in order to save up to \textbf{5x} on computational cost while retaining essentially the same performance to the base tournament which uses all samples at all steps. According to the theoretical results, this computational gap will only increase for larger $k$.

\section{Conclusion}
We introduce the question of sublinear time density estimation, a natural generalization of nearest neighbor search to discrete distributions. We obtain the first algorithm with sublinear sample complexity and query time and with polynomial preprocessing, in the proper case. In the improper case, we improve upon prior results of~\cite{acharya2018maximum} to obtain a linear time algorithm with optimal sample complexity. Our work raises a number of interesting follow up questions: To what extent can our upper bounds of query and sample complexity be improved? What are the computational-statistical tradeoffs between the sample complexity and query time? While we studied discrete distributions under total variation distance, it is also interesting to ask if we can obtain efficient retrieval algorithms for other distances for distributions, discrete or continuous.

\section*{Acknowledgements} Anders Aamand is supported by
DFF-International Postdoc Grant 0164-00022B from the Independent Research Fund Denmark. 
Research supported in part by NSF (CCF-2008733), and ONR (N00014-22-1-2713) grants. 
Justin Y. Chen is supported by  MathWorks Engineering Fellowship, GIST-MIT Research Collaboration grant, NSF award CCF-2006798, and Simons Investigator Award.
Justin Y. Chen, Shyam Narayanan, and Sandeep Silwal are supported by an NSF Graduate Research Fellowship under Grant No. 1745302.
Piotr Indyk was supported by the NSF TRIPODS program (award DMS-2022448), Simons Investigator Award and MIT-IBM Watson AI Lab.
Shyam Narayanan is supported by a Google PhD fellowship.

\bibliography{bib}
\bibliographystyle{icml2023}

\newpage
\appendix
\onecolumn

\section{Omitted Proofs of Section \ref{sec:motivation}}\label{App:omitted-proofs-motivation}

\lightbad*
\begin{proof}
Let $p=(p(1),\dots, p(n))$ where each $p(i)=1/n$. Let $x(i)\sim\text{Bin}(s,1/n)$ denote the number of times item $i$ is sampled, so that $\hat p(i)=x(i)/s$.
Fix a subset $A\subseteq [n]$ with $|A|\leq n/2$ and define $q_A=(q_A(1),\dots,q_A(n))$ by $q_A(i)=2/n$ if $i\in A$ and $q_A(i)=0$ otherwise. Note that $\|p-q_A\|_1=1$. Let $\ell_1=|\{i\in A\mid x(i)\geq 1\}|$ and $\ell_2=|\{i\in [n]\setminus A\mid x(i)\geq 1\}|$. If $s\leq n/2$, then for all $i$ such that $x(i)\geq 1$ it holds that
\[
|p(i)-\hat p(i)|=\begin{cases}
x(i)/s-p(i),& x(i)\geq 1  \\
p(i), & x(i)=0,
\end{cases}
\]
and a similar equation holds for $|q_A(i)-\hat p(i)|$. Using this, simple calculations show that
$\|\hat p - p\|_1=2-\frac{\ell_1+\ell_2}{n}$ and $\|\hat p - q_A\|_1=2-\frac{4\ell_1}{n}$. In particular, $\|\hat p - q_A\|_1-\|\hat p - p\|_1=\frac{2}{n}(\ell_1-\ell_2)$. By symmetry, if $A$ is sampled at random (and in particular $A$ and $A^c$ is sampled with the same probability), the events $\|\hat p - q_A\|_1>\|\hat p - p\|_1$ and $\|\hat p - q_A\|_1<\|\hat p - p\|_1$ occur with exactly the same probability. Moreover, conditioning on the $x(i)$'s, it is simple to check that regardless of their values, $\ell_1\neq \ell_2$ with probability $\Omega(1)$. It follows that the probability that $\|\hat p - p\|_1\leq \|\hat p - q_A\|_1$ is at most $1-c$ for some constant $c>0$. Now pick the distributions $q_1,\dots,q_{2k}$ independently and uniformly at random by picking random $A_1,\dots, A_{2k}$ with $|A_i|=n/2$ and defining $q_i=q_{A_i}$ (note that we may have repetitions). By independence, the probability that $\|\hat p - p\|_1\leq \min \|\hat p - q_i\|_1$ is $\exp(-\Omega(k))$. If $k=O(n\log n)$, it moreover holds with the same high probability that $\{q_1,\dots,q_{2k}\}$ contains at least $2k$ different distributions and so the result follows. If on the other hand $k\geq Cn\log n$ for $C$ sufficiently large, then we can pick $k'=Cn\log n$ distributions $q_1,\dots,q_{k'}$ satisfying the statement of the theorem. The error probability is $P_0=\exp(-\Omega(k'))$ but there are at most $n^{n/2}$ ways to do the sampling from $p$. Therefore, if for at least one way of sampling from $p$ it was the case that $\|\hat p - p\|_1\leq \min_{i\in[k']} \|\hat p - q_i\|_1$, then this event would in fact occur with probability at least $n^{-n/2}$. This is a contradiction since $P_0=\exp(-\Omega(k'))<n^{-n/2}$ when $C$ is sufficiently large.
\end{proof}

\heavybad*
\begin{proof}[Proof (sketch).]
Suppose for simplicity that $n=2n_0+1$ is odd. 
Let $p=(p(1),\dots, p(n))$ be given by 
\[
p(i)=
\begin{cases}
\frac{1}{2}, &i=1, \\
\frac{1}{2n_0},& i=2,\dots,n_0+1, \\ 
0,& i=n_0+2,\dots,2n_0+1, \\ 
\end{cases}
\]
Simple concentration bounds show that with high probability, $\sum_{i=2}^{n}(X(i)-sp(i))^2\leq (1+o(1))s/2$.
Define next $q=(q(1),\dots, q(n))$ by 
\[
q(i)=
\begin{cases}
\frac{1}{2}+\frac{1}{\sqrt{s}}, &i=1, \\
0,& i=2,\dots,n_0+1, \\ 
\frac{\frac{1}{2}-\frac{1}{\sqrt{s}}}{n_0},& i=n_0+2,\dots,2n_0+1, \\ 
\end{cases}
\]
Again with the assumption that $s=o(n)$, standard concentration bounds show that $\sum_{i=2}^{n}(X(i)-sq(i))^2\leq (1+o(1))s/2$ with high probability. It follows from these observations that if $X(1)\geq s/2+\sqrt{s}$, then $\|p-\hat p\|_2>  \|q-\hat p\|_2$. By standard properties of the binomial distribution $B(s,1/2)$, this happens with probability $\Omega(1)$.  
\end{proof}

\section{Omitted Proofs of Section \ref{sec:heavy_proofs}}\label{app:heavy-lemmas}

\begin{remark}
    We remark that in the proofs of this and all subsequent sections, the notation $1/\text{poly}(n)$ or $1/\text{poly}(k)$ refers to quantities of the form $1/n^C$ or $1/k^C$ where we can choose $C$ to be an arbitrarily large constant.
\end{remark}

The following is a standard tail bound for Poisson distributions\footnote{see e.g. \url{https://math.stackexchange.com/questions/2434883/chernoff-style-bounds-for-poisson-distribution/2434922} }.
\begin{lemma}\label{lem:pois}
If $Y \sim \text{Poi}(\lambda)$, then for any $t > 0$,
$\mathbb{P}(|Y  - \lambda| \ge t) \le 2 \exp\left(- \frac{t^2}{2(\lambda + t)} \right)$.    
\end{lemma}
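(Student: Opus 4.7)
The plan is the standard Chernoff argument for Poisson variables, using the closed-form moment generating function. For $Y \sim \text{Poi}(\lambda)$ one has $\mathbb{E}[e^{\theta Y}] = \exp(\lambda(e^\theta - 1))$ for all $\theta \in \mathbb{R}$. For the upper tail I would apply Markov's inequality to $e^{\theta Y}$ with $\theta > 0$, giving
\[
\mathbb{P}(Y \ge \lambda + t) \le \exp\bigl(\lambda(e^\theta - 1) - \theta(\lambda + t)\bigr),
\]
and then minimize the right-hand side over $\theta$. The optimum is $\theta^\star = \log(1 + t/\lambda)$, which yields the classical rate-function bound $\mathbb{P}(Y \ge \lambda + t) \le \exp(-\lambda\, h(t/\lambda))$ with $h(x) := (1+x)\log(1+x) - x$. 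For the lower tail I would run the mirror computation with $\theta < 0$, optimizing at $\theta = \log(1 - t/\lambda)$ when $t < \lambda$, to get $\mathbb{P}(Y \le \lambda - t) \le \exp(-\lambda\, h_-(t/\lambda))$ with $h_-(x) := (1-x)\log(1-x) + x$; for $t \ge \lambda$ the lower tail is zero and the bound is trivial.

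All that then remains is to replace the rate functions by the explicit expression $t^2/(2(\lambda + t))$ appearing in the statement. The key step is the elementary inequality $h(x) \ge \frac{x^2}{2(1+x)}$ for $x \ge 0$, from which $\lambda h(t/\lambda) \ge t^2/(2(\lambda + t))$ is immediate. I would verify this by multiplying through by $2(1+x)$ and checking that $g(x) := 2(1+x)^2 \log(1+x) - 3x^2 - 2x$ satisfies $g(0) = g'(0) = 0$ and $g''(x) = 4\log(1+x) \ge 0$ for $x \ge 0$. For the lower tail one actually has the strictly stronger bound $h_-(x) \ge x^2/2$ on $(0,1)$, proved similarly by a two-derivative Taylor argument; in particular $h_-(x) \ge x^2/(2(1+x))$ as well. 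A union bound over the two tails then yields the factor $2$ in the statement.

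The main (mild) obstacle is just the calculus inequality $h(x) \ge x^2/(2(1+x))$; everything else is the textbook exponential-moment method. As the footnote in the excerpt indicates, this is a standard Chernoff-type bound for Poissons, so no new ideas beyond these routine manipulations are needed.
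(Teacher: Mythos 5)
Your proof is correct: the exponential-moment bound, the optimization at $\theta^\star=\log(1+t/\lambda)$, and the calculus inequality $h(x)\ge x^2/(2(1+x))$ (verified via $g''(x)=4\log(1+x)\ge 0$) all check out, and this is exactly the standard Chernoff derivation that the paper itself does not reprove but only cites in a footnote as folklore. The only nitpick is the claim that the lower tail vanishes for all $t\ge\lambda$: at $t=\lambda$ the event $Y\le 0$ has probability $e^{-\lambda}$, not $0$, but this still satisfies the stated bound since $e^{-\lambda}\le\exp\left(-\lambda/4\right)=\exp\left(-t^2/(2(\lambda+t))\right)$, so nothing breaks.
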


\Cref{lem:p_p_hat_l_infinity} below bounds the $\ell_{\infty}$ distance between $\hat{p}$, the empirical distribution, and $p$, the unknown distribution. 

\begin{restatable}{lemma}{phatlinfinity}\label{lem:p_p_hat_l_infinity}
   Let $\hat{p}$ denote the empirical distribution of Algorithm \ref{alg:main_algo}. With probability at least $1-1/\text{poly}(n)$, we have that for all $i \in [n]$,
\[ | \hat{p}(i) - p(i) | \le O\left( \max\left( \sqrt{\frac{p(i) \log n}s}, \frac{\log n}s \right) \right).\] 
\end{restatable}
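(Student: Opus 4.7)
The plan is to apply the Poisson tail bound of Lemma~\ref{lem:pois} to each coordinate separately and then take a union bound over the $n$ coordinates. Recall from the Poissonization step in Section~\ref{sec:prelim} that for each $i \in [n]$, the count $X(i) = s \cdot \hat p(i)$ is distributed as $\mathrm{Pois}(s \cdot p(i))$ and these counts are independent. The inequality $|\hat p(i) - p(i)| \le O\!\bigl(\max(\sqrt{p(i)\log n/s},\log n/s)\bigr)$ is equivalent, after multiplying by $s$, to $|X(i) - s p(i)| \le O\!\bigl(\max(\sqrt{s p(i) \log n},\log n)\bigr)$.

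First, I would dispose of the trivial case $p(i) = 0$, where $X(i) = 0$ almost surely and the bound holds with no error. For the remaining coordinates, set $\lambda_i = s \cdot p(i)$ and choose the deviation threshold $t_i = C \cdot \max\!\bigl(\sqrt{\lambda_i \log n}, \log n\bigr)$ for a sufficiently large absolute constant $C$. Plugging into Lemma~\ref{lem:pois}, the failure probability for coordinate $i$ is at most $2 \exp\!\bigl(-t_i^2/(2(\lambda_i+t_i))\bigr)$, and I would verify that this is at most $1/n^{C'}$ for some large $C'$ depending on $C$ by splitting into two regimes: (i)~if $\lambda_i \ge \log n$, then $t_i = C\sqrt{\lambda_i \log n} \le \lambda_i$ (for $C$ not too large relative to $\lambda_i/\log n$, which can be arranged), so $\lambda_i + t_i \le 2\lambda_i$ and the exponent is at least $C^2 \log n / 4$; (ii)~if $\lambda_i < \log n$, then $t_i = C \log n \ge \lambda_i$, so $\lambda_i + t_i \le 2 t_i$ and the exponent is at least $t_i/4 = C \log n / 4$. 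In either regime, the per-coordinate failure probability is $1/\mathrm{poly}(n)$.

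Finally, a union bound over the at most $n$ coordinates with $p(i) > 0$ yields the claimed uniform bound with probability at least $1 - 1/\mathrm{poly}(n)$, where the polynomial exponent can be taken arbitrarily large by choosing $C$ large enough. There is no real obstacle here; the only mild subtlety is the two-regime split of the Poisson tail, which is needed because the variance term $\sqrt{\lambda_i \log n}$ dominates only when $\lambda_i \gtrsim \log n$, and the additive $\log n/s$ slack in the statement is exactly what handles the opposite regime of very rare symbols.
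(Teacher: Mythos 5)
Your proposal is correct and matches the paper's proof essentially verbatim: both apply the Poisson tail bound of Lemma~\ref{lem:pois} with threshold $t = C\max(\sqrt{sp(i)\log n},\log n)$ to each coordinate and union bound over $[n]$. Your two-regime verification of the exponent is just a more explicit version of what the paper leaves implicit (and note that even when $\log n \le \lambda_i \le C^2\log n$ so that $t_i \ge \lambda_i$, the second regime's estimate still applies, so no case is actually lost).
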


\begin{proof}
Note that $s \cdot \hat{p}(i)$ is distributed as \text{Poi}$(s \cdot p(i))$. By setting $t = C' \cdot \max\left( \sqrt{s 
\cdot p(i) \log n}, \log n \right)$ in Lemma \ref{lem:pois} for a large enough constant $C'$, we see that the probability $s \cdot \hat{p}(i)$  deviates by $t$ is at most $1/\text{poly}(n)$, where we can make the degree of the polynomial arbitrarily large by increasing $C'$.
The proof is completed by dividing by $s$.
\end{proof}

\numheavyelemes*
\begin{proof}
There can be at most $1/\gamma = n^C$ heavy elements since each heavy element has probability mass at least $\gamma$, by Algorithm \ref{alg:heavy_light_decomposition}. Let $H$ be the set of heavy elements. We know that any two distributions $w$ and $w' \in S_j$ are at most $O((\log n) \cdot (\log \log n))/\sqrt{n}$ apart in $\ell_{\infty}$. Thus, the $\ell_1$ difference restricted to the heavy elements is at most 
$ n^C \cdot O\left( \frac{(\log n) \cdot (\log \log n)}{\sqrt{n}} \right) = O\left(\frac{(\log n) \log \log n}{n^{1/2 - C}} \right)$.
\end{proof}

\optingroup*
\begin{proof}
We know $\|p - v^* \|_{\infty} \le \|p- v^*\|_2 \le \frac{1}{2 \sqrt{n}}$
by our assumption on $v^*$ given in Section \ref{sec:prelim}. Furthermore, Lemma \ref{lem:p_p_hat_l_infinity} implies that
$\|p - \hat{p}\|_{\infty} \le O\left( \frac{(\log n)^{3/4}}{\sqrt{n}} \right)$
so by adjusting constants, it follows from the triangle inequality that 
$\|\hat{p}  - v^*\|_{\infty} \le O\left( \frac{(\log n)^{3/4}}{\sqrt{n}} \right)$.
By definition of $v^{\infty}$, the distribution returned by the $\ell_{\infty}$ data structure in line $6$ of Algorithm \ref{alg:main_algo}, we know that $\|\hat{p} - v^{\infty}\|_{\infty} \le c\|\hat{p} - v^*\|_{\infty}$, where $c = O(\log(n) \cdot \log \log n)$ set in line $6$ of Algorithm \ref{alg:main_algo}. Hence,
\begin{align*}
    \|v^{\infty} - v^*\|_{\infty} &\le \|\hat{p} - v^{\infty}\|_{\infty} + \|\hat{p} - v^*\|_{\infty} \\
    &\le (c+1) \|\hat{p} - v^*\|_{\infty} \\
    &\le O\left(\frac{(\log n)^{1.75} \cdot (\log \log n)}{\sqrt{n}}\right),
\end{align*}
so $v^*$ must be in $S_{\infty}$ from line $6$ of Algorithm \ref{alg:preprocessing}.
\end{proof}

\pcloseloneheavy*
\begin{proof}
Take any $w \in S_{\infty}$. We know that
\begin{align*}
   &\|\hat{p} -w\|_{\infty} \le \|\hat{p} - v^{\infty} \|_{\infty} + \|v^{\infty} - w\|_{\infty}\\
   &\le \|\hat{p} - v^{\infty}\|_{\infty} + O\left(\frac{(\log n)^2 \cdot (\log \log n)}{\sqrt{n}} \right),
\end{align*}
where the last inequality holds from the construction of $S_{\infty}$ in Algorithm \ref{alg:preprocessing}.
Since $\|\hat{p} - v^{\infty} \|_{\infty} \le  c\|\hat{p} - v^* \|_{\infty}$, for $c$ defined in line 6 of Algorithm \ref{alg:main_algo}, by definition,
\begin{align*}
&\|p-w\|_{\infty} \le \|\hat{p} - w\|_{\infty} + \|p - \hat{p}\|_{\infty}\\
&\le c\|\hat{p} - v^*\|_{\infty} + \|p - \hat{p}\|_{\infty} + O\left(\frac{(\log n) \cdot (\log \log n)}{\sqrt{n}} \right).
\end{align*}

The proof of Lemma \ref{lem:opt_in_group} implies that $\|p - \hat{p}\|_{\infty}, \|\hat{p} - v^*\|_{\infty} \le O((\log n)^{3/4}/\sqrt{n})$, so by adjusting constant factors, the first inequality in the lemma statement holds, as $O\left((\log n)^2 \log \log n/\sqrt{n} \right)$ is the dominant term.

The second inequality in the lemma statement holds by an identical reasoning as used in the proof of Lemma \ref{lem:num_heavy_elems}, as there are at most $n^C$ heavy coordinates.
\end{proof}

\section{Omitted Proofs of Section \ref{sec:light_proofs}}\label{app:light-lemmas}

Recall that $T = \sum_{i \in L } p(i)$.
\lightEV*
\begin{proof}
Note that the light elements refer to elements in $L$ defined in Algorithm \ref{alg:main_algo}.
Set $X(i) := s \cdot \hat{p}'_L(i)$ and recall that $X(i) \sim \text{Pois}(s \cdot p_L(i))$.
We have
\begin{align*}
    &\mathbb{E}[\|s \cdot \hat{p}'_L-s \cdot v_L\|_2^2] \\
    &= \sum_{i=1}^n \mathbb{E}[ X(i)^2 + s^2v_L(i)^2 - 2sX(i)v_L(i)] \\
    &= \sum_{i=1}^n sp_L(i) + \sum_{i=1}^n \big (s^2p_L(i)^2 - 2s^2p_L(i)v_L(i) + s^2 v_L(i)^2  \big )\\
    &= sT + s^2 \|p_L-v_L\|_2^2. \qedhere
\end{align*}
\end{proof}

We now bound the variance. The following lemma bounds the variance one term at a time. 

\begin{restatable}{lemma}{varianceOne}\label{lem:variance_one}
Let $X(i) = s \cdot \hat{p}'_L(i)$. We have 
$\textup{Var}[(X(i)-sv_L(i))^2] \le 4sp_L(i) \cdot (sp_L(i)-sv_L(i))^2 + 6(sp_L(i))^2 + sp_L(i)$.
\end{restatable}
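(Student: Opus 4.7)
}

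The plan is a direct second‑moment calculation using the known central moments of the Poisson distribution. Write $\lambda = s \cdot p_L(i)$, $\mu = s \cdot v_L(i)$, and $a = \lambda - \mu$, so that $X(i) \sim \mathrm{Pois}(\lambda)$ and $X(i) - s v_L(i) = Z + a$ where $Z = X(i) - \lambda$ is the centered Poisson variable. The goal becomes bounding $\mathrm{Var}[(Z+a)^2]$ in terms of $\lambda$ and $a$, and then observing $\mu \ge 0$ to simplify the resulting expression into the stated form $4\lambda a^2 + 6\lambda^2 + \lambda$.

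First, I would record the needed central moments of $\mathrm{Pois}(\lambda)$: $\mathbb{E}[Z] = 0$, $\mathbb{E}[Z^2] = \lambda$, $\mathbb{E}[Z^3] = \lambda$, and $\mathbb{E}[Z^4] = 3\lambda^2 + \lambda$. These are standard and can be derived from the moment generating function or from the recursion for Poisson moments. From these,
\[
\mathbb{E}[(Z+a)^2] = \lambda + a^2,
\]
and expanding the fourth power and using $\mathbb{E}[Z] = 0$ gives
\[
\mathbb{E}[(Z+a)^4] = (3\lambda^2 + \lambda) + 4a\lambda + 6a^2 \lambda + a^4.
\]

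Subtracting $(\mathbb{E}[(Z+a)^2])^2 = \lambda^2 + 2\lambda a^2 + a^4$ cancels the $a^4$ term and yields
\[
\mathrm{Var}[(Z+a)^2] = 2\lambda^2 + \lambda + 4a\lambda + 4a^2 \lambda.
\]
It remains to replace the cross term $4 a \lambda$ by something in the stated target. The key (essentially trivial) observation is that $\mu = s v_L(i) \ge 0$, so $a = \lambda - \mu \le \lambda$; combined with $\lambda \ge 0$ this gives $4 a \lambda \le 4 \lambda^2$ in both the cases $a \ge 0$ (using $a \le \lambda$) and $a < 0$ (the left side is nonpositive). Plugging this in yields
\[
\mathrm{Var}[(X(i)-s v_L(i))^2] \le 6\lambda^2 + \lambda + 4 a^2 \lambda = 6 (s p_L(i))^2 + s p_L(i) + 4 s p_L(i) (s p_L(i) - s v_L(i))^2,
\]
which is exactly the claimed bound. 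There is no real obstacle here: the only thing to be careful about is the sign of $a$, but the nonnegativity of $v_L(i)$ handles both cases uniformly.
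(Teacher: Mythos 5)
Your proof is correct and takes essentially the same route as the paper: a direct computation of the fourth moment of the shifted Poisson variable followed by subtracting the squared second moment and discarding a nonpositive cross term (your bound $4a\lambda \le 4\lambda^2$ is exactly the paper's step of dropping $-4yz \le 0$, rewritten in centered variables). Working with the central moments of $Z = X(i) - \lambda$ rather than the raw moments of $X(i)$ is a cosmetic simplification, but the argument is the same.
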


\begin{proof}
Define the auxiliary variables $y := sp_L(i)$ and $z:= sv_L(i)$. We have
\begin{align*}
    \text{Var}[(X(i)-sv_L(i))^2] &= \mathbb{E}[(X(i)-sv_L(i))^4] - \mathbb{E}[(X(i)-sv_L(i))^2]^2 \\
    &= \mathbb{E}[(X(i)-sv_L(i))^4] - ( s^2 p_L(i)^2 - 2s^2p_L(i)v_L(i) + s^2v_L(i)^2 + sp_L(i) )^2 \\
    &= \mathbb{E}[(X(i)-sv_L(i))^4] - ( y+ y^2 - 2yz + z^2 )^2.
\end{align*}
 Since $X(i)$ is a Poisson random variable with parameter $y$, we know that
\begin{align*}
    \mathbb{E}[X(i)^3] &= y + 3y^2+y^3, \\
    \mathbb{E}[X(i)^4] &= y + 7y^2 + 6y^3 + y^4.
\end{align*}
By expanding, we have
\begin{align*}
    \mathbb{E}[(X(i)-sv(i))^4] &= \mathbb{E}[X(i)^4] -4 \mathbb{E}[X(i)^3]z + 6 \mathbb{E}[X(i)^2]z^2 - 4 \mathbb{E}[X(i)] z^3 + z^4 \\
    &= (y + 7y^2 + 6y^3 + y^4) -4(y + 3y^2+y^3)z + 6(y+y^2)z^2 - 4yz^3 + z^4.
\end{align*}
Putting everything together gives us
\begin{align*}
&\text{Var}[(X(i)-sv_L(i))^2] = \mathbb{E}[(X(i)-sv_L(i))^4] - ( y+ y^2 - 2yz + z^2 )^2 \\
&= \big( (y+7y^2+6y^3+y^4) - 4(y+3y^2+y^3)z + 6(y+y^2)z^2-4yz^3+z^4\big)-(y+y^2-2yz+z^2)^2\\
&= 4y^3 - 8y^2z + 4yz^2 + 6y^2 + y -4yz \\
    &= 4y(y-z)^2+6y^2 + y - 4yz \\
    &\le  4y(y-z)^2+6y^2 + y \\
    &= 4sp_L(i) \cdot (sp_L(i)-sv_L(i))^2 + 6(sp_L(i))^2 + sp_L(i). \qedhere
\end{align*}
\end{proof}

The following lemma bounds the total variance.
\begin{lemma}\label{lem:variance}
$\textup{Var}[\|s \cdot \hat{p}'_L - s \cdot v_L\|_2^2]\le 4s^3 \|p_L\|_2 \|p_L-v_L\|_2^2 + 6s^2\|p_L\|_2^2 + sT$.
\end{lemma}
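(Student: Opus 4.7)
The plan is to exploit Poissonization to reduce the variance of the sum to a sum of variances, apply the per-coordinate bound from Lemma~\ref{lem:variance_one}, and then control the resulting sums via Hölder-type inequalities. Concretely, write
\[
\|s\cdot \hat p'_L - s\cdot v_L\|_2^2 \;=\; \sum_{i\in L} (X(i) - sv_L(i))^2,
\]
where $X(i) = s\cdot \hat p'_L(i) \sim \mathrm{Pois}(s\cdot p_L(i))$. By the Poissonization property recalled in Section~\ref{sec:prelim}, the counts $\{X(i)\}_{i\in L}$ are mutually independent, hence so are the summands $(X(i)-sv_L(i))^2$. Therefore the variance of the sum equals the sum of variances.

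Next, I would invoke Lemma~\ref{lem:variance_one} term by term to obtain
\[
\mathrm{Var}\bigl[\|s\cdot \hat p'_L - s\cdot v_L\|_2^2\bigr]
\;\le\; \sum_{i\in L}\Bigl[4sp_L(i)\,(sp_L(i)-sv_L(i))^2 + 6(sp_L(i))^2 + sp_L(i)\Bigr].
\]
The second summand evaluates directly to $6s^2\|p_L\|_2^2$, and the third to $s\sum_{i\in L} p_L(i) = sT$ by definition of $T$. For the first summand, I factor out $s^3$ and bound
\[
\sum_{i\in L} p_L(i)\,(p_L(i)-v_L(i))^2
\;\le\; \|p_L\|_{\infty}\cdot \|p_L - v_L\|_2^2
\;\le\; \|p_L\|_2\cdot \|p_L-v_L\|_2^2,
\]
using $\|p_L\|_{\infty} \le \|p_L\|_2$, which follows since $\|p_L\|_\infty^2 \le \sum_i p_L(i)^2 = \|p_L\|_2^2$. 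Adding the three pieces together yields the claimed bound.

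The main (minor) obstacle is justifying the independence step cleanly, since the statement is about the full vector $\hat p'_L$ rather than a single coordinate; this is handled by citing the Poissonization framework already set up in the preliminaries, which guarantees that the per-coordinate counts are independent Poissons. The per-coordinate variance computation itself is the real work, and that has already been dispatched in Lemma~\ref{lem:variance_one}; the remainder is the mild inequality $\|p_L\|_\infty \le \|p_L\|_2$, so no further obstacles arise.
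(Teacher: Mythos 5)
Your proof is correct and follows essentially the same route as the paper's: independence of the Poissonized coordinates reduces the variance of the sum to the sum of the per-coordinate variances from Lemma~\ref{lem:variance_one}, and the only remaining step is bounding $\sum_i p_L(i)(p_L(i)-v_L(i))^2 \le \|p_L\|_2\|p_L-v_L\|_2^2$, which your $\|p_L\|_\infty \le \|p_L\|_2$ argument handles cleanly (the paper states this inequality without elaboration).
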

\begin{proof}
Summing the result of Lemma \ref{lem:variance_one} for all coordinates $i$ gives us
\begin{align*}
&\textup{Var}[\|s \cdot \hat{p}'_L - s \cdot v_L\|_2^2] \\
&\le \sum_{i=1}^n 4sp_L(i) \cdot (sp_L(i) - sv_L(i))^2 + 6(sp_L(i))^2 + sp_L(i)\\
&=  \sum_{i=1}^n 4s^3p_L(i) \cdot (p_L(i) - v_L(i))^2 + 6s^2 \|p_L\|_2^2 + sT \\
    &\le 4s^3 \|p_L\|_2 \|p_L-v_L\|_2^2 + 6s^2\|p_L\|_2^2 + sT. \qedhere
\end{align*}
\end{proof}

\lightconcentrationone*
\begin{proof}
By Chebyshev's inequality and Lemma \ref{lem:variance},
\begin{align*}
    &\mathbb{P}[|Z_1 - \mathbb{E}[Z_1]| \ge t] \le \frac{\text{Var}[Z_1]}{t^2}\\
    &\lesssim \frac{s^3 \|p_L\|_2 \|p_L-v_L\|_2^2}{s^4\eps^4/n^2} + \frac{s^2 \|p_L\|_2^2}{s^4 \eps^4/n^2} + \frac{s}{s^4 \eps^4/n^2} \\
    &\le \frac{n\|p_L\|_2}{s \eps^2} + \left(\frac{n \|p_L\|_2}{s \eps^2} \right)^2 + \frac{n^2}{s^3\eps^4}
    \le 0.01
\end{align*}
where the last inequality holds if $s \ge C\max\left(\frac{n \|p_L\|_2}{\eps^2}, \frac{n^{2/3}}{\eps^{4/3}}\right)$ for a sufficiently large constant $C$.
\end{proof}

We then consider the case where $\eps/\sqrt{n} \lesssim \|p_L-v_L\|_2$. As stated in the main body, we need to obtain a stronger concentration result since later we union bound over possibly $\Omega(k)$ different distributions which are in group $S_{\infty}$.

To obtain a stronger concentration result, we need the coordinates of both $\hat{p}_L$ and $v_L$ to be bounded. For each $v_i$, we know that $\|(v_i)_L\|_{\infty} \le 2\gamma$ by construction if we set $\gamma \ge \tilde{\Omega}(1/\sqrt{n})$. Lemma \ref{lem:p_p_hat_l_infinity} readily implies a similar statement for $\hat{p}_L$ (with high probability).

\begin{remark}\label{rem:tau_bound}
Therefore, in the following concentration bound, we utilize the fact that $\max_i \{ p_L(i), v_L(i) \} \le O(\gamma)$, which holds with high probability.
\end{remark}
The main tool we use is Bernstein's concentration inequality on $Z_1 = \|s \cdot \hat{p}'_L-s \cdot v_L\|_2^2$. Towards this, we first prove a bound on a variance like quantity.

\begin{lemma}\label{lem:bernstein}
With probability $1-1/\textup{poly}(k)$ (over the randomness in $\hat{p}'$), we have
\begin{align*}
 &\max_{i \in [n]} |(s \cdot \hat{p}'_L(i)  - s \cdot v_L(i))^2 - \E[(s \cdot \hat{p}'_L(i) - s \cdot v_L(i))^2]| \le O\left( \log k \cdot (s\gamma)^{1.5})\right).
\end{align*}

\end{lemma}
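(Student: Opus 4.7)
The plan is to reduce this bound to a coordinate-wise Poisson concentration inequality. Write $X(i) = s \cdot \hat{p}'_L(i) \sim \mathrm{Pois}(s p_L(i))$ and $\Delta_i = X(i) - s p_L(i)$. Since $\mathbb{E}[X(i)] = s p_L(i)$ and $\mathrm{Var}[X(i)] = s p_L(i)$, we have the algebraic identity
\[
(X(i) - s v_L(i))^2 - \mathbb{E}[(X(i) - s v_L(i))^2] \;=\; \Delta_i^2 - s p_L(i) + 2 \Delta_i \bigl(s p_L(i) - s v_L(i)\bigr),
\]
so it suffices to bound $|\Delta_i|$ uniformly over $i \in [n]$ and then plug in.

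To bound $|\Delta_i|$, I would apply the Poisson tail bound of \Cref{lem:pois} with $t_i = C'\max\bigl(\sqrt{s p_L(i) \log k},\, \log k\bigr)$ for a sufficiently large constant $C'$. This gives $|\Delta_i| \le t_i$ with probability $1 - 1/\mathrm{poly}(k)$ per coordinate. Since $n \le k^C$ (the distributions live in a domain polynomially related to $k$), a union bound over $i \in [n]$ preserves the $1 - 1/\mathrm{poly}(k)$ failure probability after inflating $C'$. Using \Cref{rem:tau_bound}, $p_L(i) \le O(\gamma)$ uniformly, so under the good event
\[
|\Delta_i| \;\lesssim\; \sqrt{s\gamma \log k} + \log k \quad \text{for every } i \in L.
\]

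Now I would plug the bound on $|\Delta_i|$ back into the decomposition. The variance term satisfies $|\Delta_i^2 - s p_L(i)| \lesssim s\gamma \log k + (\log k)^2 + s\gamma$, and since $|s p_L(i) - s v_L(i)| \le O(s\gamma)$ (again by \Cref{rem:tau_bound}), the cross term satisfies
\[
|2\Delta_i (s p_L(i) - s v_L(i))| \;\lesssim\; s\gamma \cdot \bigl(\sqrt{s\gamma \log k} + \log k\bigr) \;=\; (s\gamma)^{3/2}\sqrt{\log k} + (s\gamma)\log k.
\]
In the parameter regime where $s\gamma \gtrsim \log k$ (which will hold given the sample size choice in \Cref{thm:main}), the cross term dominates and is bounded by $O(\log k \cdot (s\gamma)^{3/2})$, absorbing all lower-order contributions.

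The main technical point is not hard; it is essentially just tracking the right tail bound for Poissons and noting that the coordinate-wise probability-$\gamma$ cap from the heavy/light partitioning is exactly what controls the cross term $\Delta_i \cdot (s p_L(i) - s v_L(i))$. The only subtlety is choosing the Poisson tail probability small enough to absorb a union bound over all $n$ coordinates while still stating the final bound with a single $\log k$ factor rather than $\sqrt{\log k}$; this is just a matter of bookkeeping the constants and using the $\sqrt{\log k} \le \log k$ slack.
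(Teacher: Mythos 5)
Your proposal is correct and matches the paper's proof: the same decomposition of $(X(i)-sv_L(i))^2-\E[(X(i)-sv_L(i))^2]$ (the paper writes it as $X(i)^2-s^2p_L(i)^2-sp_L(i)+2sv_L(i)(sp_L(i)-X(i))$, which is algebraically identical to your $\Delta_i^2-sp_L(i)+2\Delta_i(sp_L(i)-sv_L(i))$), the same Poisson tail bound from \Cref{lem:pois} applied uniformly over all $i$, and the same use of the $O(\gamma)$ coordinate cap to make the cross term dominate at $O(\log k\cdot(s\gamma)^{1.5})$.
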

\begin{proof}
Define $X(i) = s \cdot \hat{p}'_L(i)$.
Recall each $X(i)$ is distributed as $\text{Pois}(s \cdot p_L(i))$. From Lemma \ref{lem:pois}, we have that with probability $1-1/\text{poly}(k)$,  $|X(i) - sp_L(i)| \le O\left(\max\left( \sqrt{sp_L(i) \log k}, \log k \right)\right)$ for \emph{all} $i$. Call this event $\mathcal{E}$.
Now
\begin{align*}
&(X(i) - s \cdot v_L(i))^2 - \E[(X(i) - s \cdot v_L(i))^2] \\
&= X(i)^2 - 2sv_L(i) X(i) + s^2 v_L(i)^2 - (s^2p_L(i)^2 - 2s^2p_L(i)v_L(i) + s^2 v_L(i)^2 + sp_L(i))\\
&= X(i)^2 - s^2p_L(i)^2 - sp_L(i) + 2sv_L(i)(sp_L(i) - X(i)).
\end{align*}

Thus conditioning on $\mathcal{E}$ and recalling Remark \ref{rem:tau_bound}, we have 
\[|X(i)^2 - s^2p_L(i)^2 - sp_L(i)| \le O((sp_L(i))^{1.5} \log k ) = O(  (s \gamma)^{1.5} \log k) \]
and 
\[|2sv_L(i)(sp_L(i) - X_L(i))| \le O(  s \gamma \cdot (\sqrt{s \gamma} + 1) \log k ) \]
for all $i$.
Altogether, we have that under $\mathcal{E}$,
\begin{align*}
 &\max_i |(X(i) - s \cdot v_L(i))^2 - \E[(X(i) - s \cdot v_L(i))^2]| \\
 &\le O\left(\log k \cdot (s^{1.5} \gamma^{1.5}\right)). \qedhere
\end{align*}

\end{proof}

\lightconcentrationtwo*
\begin{proof}
Define $X(i) = s \cdot \hat{p}'_L(i)$.
Recall Bernstein's inequality: for a random variable $R = \sum_i R_i$ where $R_i$ are independent, it states
\begin{equation}\label{eq:bernstein}
   \mathbb{P}(|R - \mathbb{E}[R]| \ge t) \le 2\exp\left( -\frac{t^2/2}{\text{Var}(R) + tM/3} \right)
\end{equation}
where $M$ is such that $|R_i - \E[R_i]| \le M$ with probability $1$ for every $i$. In our case, $Z_1 = \sum_i (X(i) - s \cdot v_L(i))^2$, so we must first bound the maximum deviation of \[|(X(i) - s \cdot v_L(i))^2 - \E[(X(i) - s \cdot v_L(i))^2]|\] for all $i$. To do so, we condition on the event $\mathcal{E}$ of Lemma \ref{lem:bernstein}. This gives
\[\mathbb{P}[| Z_1 - \mathbb{E}[Z_1] | \ge t]
=   \mathbb{P}[| Z_1 - \mathbb{E}[Z_1]| \ge t \mid \mathcal{E}] \cdot \mathbb{P}(\mathcal{E}) + \mathbb{P}[| Z_1 - \mathbb{E}[Z_1]| \ge t \mid \mathcal{E}^c] \cdot \mathbb{P}(\mathcal{E}^c).\]

We have
$\mathbb{P}[| Z_1 - \mathbb{E}[Z]| \ge t \mid \mathcal{E}^c] \cdot \mathbb{P}(\mathcal{E}^c) \le \mathbb{P}(\mathcal{E}^c) \le \frac{1}{\text{poly}(k)}$
so it suffices to bound $\mathbb{P}[| Z_1 - \mathbb{E}[Z_1]| \ge t \mid \mathcal{E}]$. We drop the conditioning $\mathcal{E}$ for simplicity.
Note that a similar analysis as above also shows that 
$|\mathbb{E}[Z_1] - \mathbb{E}[Z_1 \mid \mathcal{E}]| \le \frac{1}{\text{poly}(k)}$
which is a much smaller lower order term compared to $t$ so we ignore it for simplicity. A similar statement holds for the variance of $Z_1$. 

Now Bernstein's inequality gives us
\begin{align*}
&\mathbb{P}(|Z_1 - \mathbb{E}[Z_1]| \ge t) \lesssim \exp\left( -\frac{t^2/2}{4s^3\|p_L\|_2\|p_L-v_L\|_2^2 + 6s^2\|p_L\|_2^2 + s + tM/3} \right)
\end{align*}

where we have used Lemma \ref{lem:variance} to substitute in the variance. We can furthermore set $M = O\left(\log k \cdot (s\gamma)^{1.5}\right))$ from Lemma \ref{lem:bernstein}.
If we show that
\[\min\left(  \frac{t^2}{s^3\|p_L\|_2 \|p_L-v_L\|_2^2},  \frac{t^2}{s^2\|p_L\|_2^2},  \frac{t^2}{s},  \frac{t}{M} \right) \ge \Omega(\log k),\]
where the constants in the $\Omega(\log k)$ are sufficiently large,
then it follows that 
\[\frac{t^2/2}{4s^3\|p\|_2\|p_L-v_L\|_2^2 + 6s^2\|p_L\|_2^2 + s + tM/3} \ge \Omega(\log k) \]
and we get the desired concentration bound. Thus, we proceed to analyze each of the four fractions individually. Using $t =  \frac{s^2 \|p_L-v_L\|_2^2}{4} + \frac{s^2 \eps^2}{4n}$, we see that \begin{align*}
    \frac{t^2}{s^3\|p_L\|_2 \|p_L-v_L\|_2^2 }  &\ge \frac{s\|p_L-v_L\|_2^2}{\|p_L\|_2} \ge \frac{s \eps^2}{n \|p_L\|_2}, \\
    \frac{t^2}{s^2 \|p_L\|_2^2}  &\ge \frac{s^2\eps^4}{n^2\|p_L\|_2^2}, \\
    \frac{t^2}{s} &\ge \frac{s^3 \eps^4}{16n^2}.
\end{align*}
Thus, $s \ge \Omega(\max(n^{2/3}\log(k)^{2/3}/\eps^{2/3}, n\|p_L\|_2 \log k/\eps^2))$ is required. We now consider $t/M$ Recalling their values, we observe that
\begin{align*}
     \frac{t}{s^{1.5}\gamma^{1.5}} &\ge \frac{s^{0.5}\eps^2}{n \gamma^{1.5}}.
\end{align*}
(Note that $s \gamma \gg 1$ in our setting). If $s \ge \Omega(n \|p\|_2 \log k / \eps^2)$, then $s\eps^2/(n\gamma) = \Omega(\log k)$. To ensure $\frac{s^{0.5}\eps^2}{n \gamma^{1.5}} \ge \Omega(\log k)$,
 we need to satisfy $s = \Omega\left(\ \frac{n^2\gamma^{3}\log(k)^2}{\eps^4} \right).$
Recalling the value of $\gamma$, we can easily check that $s \ge \Omega\left(\ \frac{n^2\gamma^{3}\log(k)^2}{\eps^4} \right)$ implies all other lower bounds we have imposed on $s$ so far, proving the lemma.
\end{proof}

\section{Proof of Theorem \ref{thm:main}}\label{sec:main_thm_proof}

\mainthm*
\begin{proof}
Let $v^{\infty}$ be the output of line $6$ of Algorithm \ref{alg:main_algo} and $S_{\infty}$ be the group of $v^{\infty}$ from Algorithm \ref{alg:preprocessing}, the preprocessing step. Let $H$ and $L$ be the partition of $[n]$ into heavy and light elements induced by $v^{\infty}$.
From Lemma \ref{lem:p_close_l_1_heavy}, we know that $p$ is close to $\tilde{v}$ on $H$ as $\tilde{v} \in S_{\infty}$: $\|p - \tilde{v}\|_1 \le o(1)$. Thus it suffices to bound $\|p_L - \tilde{v}_L\|_1$.

Lemma \ref{lem:opt_in_group} states that $v^* \in S_{\infty}$ with probability $1-1/\text{poly}(n)$. Since we assumed that $\|p - v^*\|_2 \le \eps/(2 \sqrt{n})$, Lemma \ref{lem:light_concentration_1} implies that 
\begin{equation}\label{eq:proof1}
    \|s \cdot \hat{p}'_L - s \cdot v^*_L\|_2^2 \le sT + \frac{5s^2 \eps^2}{16n} 
\end{equation}

with probability $1-o(1)$ (note Lemma \ref{lem:light_concentration_1} is stated as 
holding with probability $99\%$ but we are picking a much larger value of $s$ than required by the lemma. Plugging into the Chebyshev inequality bound readily gives us that the failure probability is $o(1)$).

Now we claim that $\tilde{v}$ \emph{cannot} satisfy $\|\tilde{v}_L - p_L\|_2 \ge 0.99 \cdot \eps/\sqrt{n}$. Let us assume for the sake of contradiction that $\|\tilde{v}_L - p_L\|_2 \ge 0.99 \cdot \eps/\sqrt{n}$. Then from Lemma \ref{lem:light_EV},
\[\mathbb{E}[ \|s \cdot \hat{p}'_L - s \cdot \tilde{v}_L\|_2^2] = sT + s^2 \|p_L - \tilde{v}_L\|_2^2,\] and Lemma \ref{lem:light_concentration_2} implies that with probability at least $1-1/\text{poly}(k)$, we have
\begin{equation}\label{eq:proof2}
 \|s \cdot \hat{p}'_L - s \cdot \tilde{v}_L\|_2^2
    \ge   s T + .75s^2\|p_L - \tilde{v}_L \|_2^2 - 0.25 \cdot  \frac{s^2\eps^2}n  \ge sT + 0.48 \cdot \frac{s^2 \eps^2}{n},
\end{equation}
where the last inequality follows from the assumption that $\|p_L - \tilde{v}\|_2^2 \ge (.99)^2 \eps^2/n$.
However, the ratio of the quantities on the right hand side of \eqref{eq:proof2} and \eqref{eq:proof1} is at least
\[\frac{sT + 0.48 \cdot \frac{s^2\eps^2}{n}}{sT +  \frac{5s^2 \eps^2}{16n}} \ge  1 + \frac{z/2}{T+z}\]
for $z = 5s\eps^2/(16n)$. Now if $T \le z$ then $z/2/(T+z) \ge 1/4$ so the ratio is at least $1.25$. Otherwise, if $T > z$, we always know that $T \le 1$ so $T+z \le 2$ and hence the ratio is at least 
$1+\frac{z}{4} \ge 1 + \frac{5s\eps^2}{64n}$.
Since $(1+x)^{1/2} \ge 1+x/2.5$ for any $x \in [0,1]$, we have that (in our regime of $s$) the \emph{square root} of the ratio is strictly larger than $1+s\eps^2/(32n)$,
by setting $s$ appropriately. However, this contradicts the $\ell_2$ nearest neighbor search guarantees set in line $10$ of Algorithm \ref{alg:main_algo}, since it means we return a $\tilde{v}$ which has the property that $\| \hat{p}'_L - \tilde{v}_L\|_2$ is larger than $\| \hat{p}'_L - v^*_L\|_2$ by a factor larger than $c$ set in line 10 (and $v^*$ was also considered by the $\ell_2$ data structure).

Therefore, we must have $\|\tilde{v}_L - p_L\|_2 \le 0.99 \cdot \eps/\sqrt{n}$ which implies that $\|\tilde{v}_L - p_L\|_1 \le 0.99 \cdot \eps$. Since we already know that 
$\|\tilde{v}_H - p_H\|_1 \le o(1)$, we have that $\|\tilde{v} - p\|_1 \le (1+o(1)) 0.99 \cdot \eps < \eps$. This proves the approximation.

For the sample complexity, note that if we set $s = \Theta(n/(\log k)^{1/4})$ then $s$ is larger than the values required in Lemmas \ref{lem:light_concentration_1} and \ref{lem:light_concentration_2}. The prepossessing time follows from Theorem \ref{thm:l_infinity_nns}. Finally, the query time follows from \ref{thm:l2_nns} by plugging in the value of $c$ from line 10 of Algorithm \ref{alg:main_algo} (the $\ell_2$ NNS approximation parameter) into statement of Theorem \ref{thm:l2_nns} and noting that all $\log(k)$ factors can be absorbed into the exponent as $k^{\Theta(1/(\log k)^{1/4})} \gg \text{poly}(\log k)$. Note that we get an extra $\tilde{O}(n)$ runtime from also querying an $\ell_{\infty}$ NNS datastructure as well in step $6$ of Algorithm \ref{alg:main_algo}.
Finally, we remark that an alternative expression for the approximation guarantee is $\|p - \tilde{v}\|_1 \le \|p - v^*\|_1 + \eps$.
This completes the proof.
\end{proof}

\section{Omitted Proofs of Section \ref{sec:fast-tournament}}\label{sec:fast_tournament_proof}

\fasttournament*
\begin{proof}
The statement about the number of samples is clear. Indeed, for the iterative part of the algorithm, we use a total of $s_{\lg k}=\frac{10\log(k^2/\delta)}{\eps^2}$ samples and for the final part, we use a further $\frac{10\log (|\mathcal{C}|^2/\delta)}{\eps^2}\leq \frac{10\log (k^2/\delta)}{\eps^2}$ samples. Regarding the running time, we already argued above that the time used for the knockout tournament is $O\left(\frac{k}{\eps^2}\log \frac{1}{\delta}\right)$. Further, since $|\mathcal{C}|\leq k^{1/3}\log k$, running the Scheffe test on all pairs of distributions in $\mathcal{C}$ takes time $O\left(sk^{2/3}\log^2 k\right)=O\left(\frac{k}{\eps^2}\log \frac{1}{\delta}\right)$.

For the bound on the quality of the estimate $\hat v$, we define $v^*=\argmin_{v\in \mathcal{V}} \|p- v\|_1$ and let  $\mathcal{W}_0=\{v\in \mathcal{V}: \|p- v\|_1\leq 3 \|p- v^*\|_1+\eps\}$ and $\mathcal{W}_1=\mathcal{V}\setminus \mathcal{W}_0$. We first want to argue that with probability $1-O(\delta)$, some element of $\mathcal{W}_0$ gets added to $\mathcal{C}$.
For $i=1,\dots,\lg k$, we let $T_i$ be the set of $2^i-1$ distributions that $v^*$ could possibly be paired with in step 2. of the algorithm during the  first $i$ rounds of the tournament (this is the set of $2^i-1$ distributions in the subtree of the tournament tree rooted $i$ steps above $v^*$). Let $A_i$ denote the event that there exists a distribution $v'$ in $T_i\cap \mathcal{W}_1$ such that $v^*$ loses the Scheffe test to $v'$ using the sample $\mathcal{S}_i$. By the choice of $s_i$ and the bound in~\eqref{eq:scheffe-estimate}, we have that
\[
\Pr[A_i]\leq |T_i\cap \mathcal{W}_1|\delta_i\leq \frac{\delta}{2^i},
\]
so $\Pr[\bigcup_i A_i]\leq \delta$. We now consider two cases: Suppose first that at some stage $i$, $\frac{|\mathcal{V}_i\cap \mathcal{W}_0|}{|\mathcal{V}_i|}\geq \frac{\log 1/\delta}{k^{1/3}}$. In this case, the probability that no element of $\mathcal{V}_i$ is added to $\mathcal{C}$ during step 1.~of iteration $i$ of the algorithm is at most $(1-\frac{\log 1/\delta}{k^{1/3}})^{k^{1/3}}\leq \delta$. If on the other hand, at each step $i$, $\frac{|\mathcal{V}_i\cap \mathcal{W}_0|}{|\mathcal{V}_i|}< \frac{\log 1/\delta}{k^{1/3}}$, then the probability that $v^*$ gets paired with an element of $\mathcal{W}_0$ in step $i$ (conditioned on it surviving the first $i-1$ steps) is at most $\frac{\log 1/\delta}{k^{1/3}}$. Note that if none of the events $A_i$ occurs and $v^*$ never gets paired with a distribution in $\mathcal{W}_0$, then $v^*$ will win the tournament, and thus get added to $\mathcal{C}$ in some step. Thus, by a union bound, the probability that no element of $\mathcal{W}_0$ gets added to $\mathcal{C}$ is at most
\[
\Pr\left[\bigcup_i A_i\right]+\delta+\frac{\log k\log 1/\delta}{k^{1/3}}\leq 2\delta+ \frac{(\log k)^2}{4k^{1/3}}=O(\delta),
\]
where we used $\delta\geq k^{-1/4}$ in the final step.

It was shown in~\cite{devroye2001combinatorial} that running the Scheffe test on all pairs of distributions in a set $\mathcal{C}$ using a sample of size $\frac{10\log \left(\binom{|\mathcal{C}|}{2}/\delta\right)}{\eps^2}$ and outputting the one $\hat v$ with the most wins, we have with probability at least $1-\delta$ that
\[
\|p-\hat v\|_1 \le 9 \cdot \min_{v\in \mathcal{C}} \|p- v\|_1 + 4\eps. 
\]
Since $\mathcal{C}$ contains a distribution of $\mathcal{W}_0$ with probability $1-O(\delta)$ we obtain that with probability $1-O(\delta)$,
\begin{align*}
&\|p-\hat v\|_1 \le 9 \cdot \min_{v\in \mathcal{C}} \|p- v\|_1 + 4\eps \\
&\le 9 \cdot (3 \|p- v^*\|_1+\eps) + 4\eps = 27 \|p- v^*\|_1+13\eps,
\end{align*}

as desired.
\end{proof}

\begin{remark}
We can also handle error probabilities $\delta<k^{-1/4}$. In this case, we set $\delta_0=k^{-1/4}$ and run the iterative part of the algorithm above $\ell=\frac{\log 1/\delta}{\log 1/\delta_0}$ times to obtain a candidate set of distributions $\mathcal{C}\subseteq \mathcal{V}$ ($\ell$ times larger than before) which contains an element of $\mathcal{W}_0$ with probability $1-O(\delta)$. We then again run the complete tournament on $\mathcal{C}$ with a sample of size $\frac{10\log \left(\binom{|\mathcal{C}|}{2}/\delta\right)}{\eps^2}$. The bound on the quality of the estimate, follows as above. Moreover, easy calculations show that the number of samples used is still $O(\frac{\log 1/\delta}{\eps^2})$ and that the total running time is 
\[
O\left(\frac{1}{\eps^2}\log \frac{1}{\delta}\left( k+\ell^2k^{2/3}\log^2k\right)\right),
\]
which is $O\left(\frac{k}{\eps^2}\log \frac{1}{\delta}\right)$ except for extremely small $\delta$.
\end{remark}
\begin{remark}
In a slightly different model, we are not given the distributions of $\mathcal{V}$ explicitly but can only access them through sampling. In this model, we can skip the preprocessing step and instead estimate the probabilities $v_i(S)$ and $v_j(S)$ through sampling whenever we run the Scheffe test for distributions $v_i$ and $v_j$. This comes at the cost of a larger value of the constant $C$. When running the algorithm above in this model, we obtain the same reduction of the running time and with no preprocessing.
\end{remark}

\end{document}